\let\oldnl\nl
\newcommand{\nonl}{\renewcommand{\nl}{\let\nl\oldnl}}
\def\HiLi{\leavevmode\rlap{\hbox to \hsize{\color{red!20}\leaders\hrule height .8\baselineskip depth .5ex\hfill}}}
\newcommand\independent{\protect\mathpalette{\protect\independenT}{\perp}}
\def\independenT#1#2{\mathrel{\rlap{$#1#2$}\mkern2mu{#1#2}}}
\newtheorem{definition}{Definition}[section]
\newtheorem{theorem}{Theorem}[section]
\newtheorem{example}{Example}[section]
\newcommand{\revised}[1]{\textcolor{black}{~#1}{\typeout{#1}}}
\newcommand{\sysName}{\textsc{MESA}}
\newcommand{\probName}{\textsc{Correlation-Explanation}}
\begin{document}

\setlength{\abovedisplayskip}{0pt}
\setlength{\belowdisplayskip}{0pt}
\setlength{\abovedisplayshortskip}{0pt}
\setlength{\belowdisplayshortskip}{0pt}

\title{On Explaining Confounding Bias}


\author{Brit Youngmann}
\affiliation{%
  \institution{CSAIL MIT}
  \country{}
}
\email{brity@mit.edu}

\author{Michael Cafarella}
\affiliation{%
  \institution{CSAIL MIT}
    \country{}
}
\email{michjc@csail.mit.edu}

\author{Babak Salimi}
\affiliation{%
  \institution{University of California, San Diego}
    \country{}
}
\email{bsalimi@ucsd.edu}

\author{Yuval Moskovitch}
\affiliation{%
  \institution{Ben Gurion University of the Negev}
    \country{}
}
\email{yuvalmos@bgu.ac.il}

\begin{abstract}
When analyzing large datasets, analysts are often interested in the explanations for surprising or unexpected results produced by their queries. In this work, we focus on aggregate SQL queries that expose correlations in the data. A major challenge that hinders the interpretation of such queries is \emph{confounding bias}, which can lead to an unexpected correlation. We generate explanations in terms of a set of \emph{confounding variables} that explain the unexpected correlation observed in a query. We propose to mine candidate confounding variables from external sources since, in many real-life scenarios, the explanations are not solely contained in the input data. We present an efficient algorithm that finds the optimal subset of attributes (mined from external sources and the input dataset) that explain the unexpected correlation. This algorithm is embodied in a system called \sysName. We demonstrate experimentally over multiple real-life datasets and through a user study that our approach generates insightful explanations, outperforming existing methods that search for explanations only in the input data. We further demonstrate the robustness of our system to missing data and the ability of \sysName\ to handle input datasets containing millions of tuples and an extensive search space of candidate confounding attributes.



\end{abstract}
\maketitle

\settopmatter{printacmref=false}


\section{Introduction}



When analyzing large datasets, analysts often query their data to extract insights. Oftentimes, there is a need to elaborate upon the queries' answers with additional information to assist analysts in understanding unexpected results, especially for aggregate queries, which are harder to interpret~\cite{salimi2018bias,lin2021detecting}.
While aggregate query results expose correlations in the data, the human mind cannot avoid a causal interpretation. Thus, we provide explanations for unexpected correlations observed in aggregate queries using causation terms.

In this work, we focus on SQL queries that are aggregating an {\em outcome attribute} ($O$) based on some groups of interest indicated by a grouping attribute, referred to as the {\em exposure} ($T$) \cite{pearl2009causality}. 
A major challenge that hinders the interpretation of such queries is {\em confounding bias}~\cite{pourhoseingholi2012control} that can lead to a spurious association between $T$ and $O$ and hence perplexing conclusions.
Confounding bias occurs when an analyst tries to determine the effect of an exposure on an outcome
but unintentionally measures the effect of another factor(s) (i.e., a
\emph{confounding variable(s)}) on the outcome. This results in a distortion of the actual association between $T$ and $O$~\cite{pearl2009causality}. We are interested in generating explanations in terms of a set of confounding variables that explain unexpected correlations observed in query results.

Previous work detected uncontrolled confounding variables from the data~\cite{salimi2018bias}. However, in many cases, such variables might be found outside the narrow query results that and the database being used~\cite{li2021putting}. Thus, there is a need to develop automated solutions that can explain unexpected correlations observed in query results to analysts, which goes beyond just the data accessed by the query. To illustrate, consider the following example.

\begin{example}
\label{ex:intro_ex1}
Ann is a data analyst in the WHO organization who aims to understand the coronavirus pandemic for improved policymaking.
She examines a dataset containing information describing Covid-19-related facts in multiple cities worldwide.
It consists of the number of deaths-/recovered-/active-/new- per-100-cases in each city. 
Ann evaluates the following query over this dataset: 
\begin{center}
\small
    \begin{tabular}{l}
         \verb"SELECT Country, avg(Deaths_per_100_cases)"\\
         \verb"FROM Covid-Data"\\
         \verb"GROUP BY Country"\\
    \end{tabular}
\end{center}
A visualization of the query results is given in Figure \ref{fig:covid_example}.
Here, the exposure is \textsc{Country} and the outcome is \textsc{Deaths\_per\_100\_cases}.
Ann observes a puzzling correlation between the exposure and outcome; namely, she wonders why the choice of the country has such a substantial effect on the death rate. 
She is interested in finding a set of confounding variables that explain this association. 
She sees that the attribute \textsc{Confirmed\_cases} from \textsc{Covid-data} is correlated with \textsc{Deaths\_per\_100\_cases}. However, this attribute alone is not enough to explain the correlation. 
For example, she sees that while Germany had the fifth-most confirmed cases worldwide, it had only a fraction of the death toll in other countries.
Ann understands that other factors (that are not in this data) affect the association between death rate and country. 
She remembers reading in the news that as a country’s success (defined by multiple variables, including GDP\footnote{Gross domestic product (GDP) is the monetary value of all goods and services made within a country during a specific period.} and HDI\footnote{The Human Development Index (HDI) is a statistic composite index of life expectancy, education, and per capita income indicators.}) grows, the death rate decreases~\cite{upadhyay2021correlation,kaklauskas2022effects}. 
However, such economic features of countries are not available in her dataset but could be extracted from external sources. 
\end{example}

\begin{figure}[]
\begin{center}
		\includegraphics[scale = 0.4]{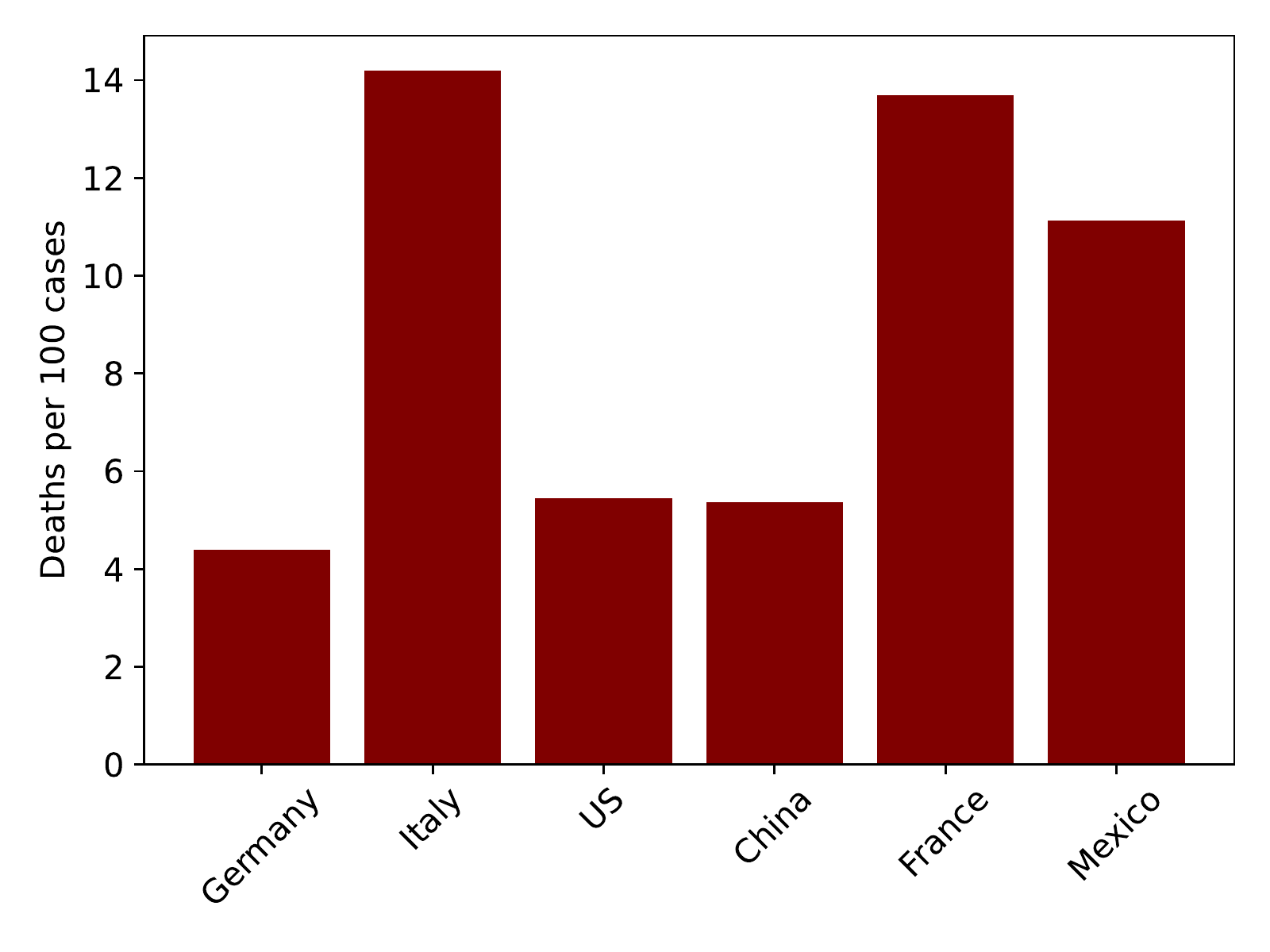}
		\vspace{-8px}
	
		\caption{Visualization of the results of the query $Q$. } 
			\label{fig:covid_example}
\end{center}
\vspace{-6mm}
\end{figure}

We propose to mine unobserved confounding attributes from external sources. In general, our framework can extract candidate confounders from any knowledge source (e.g., related tables, data lakes, web tables) as long as it can be integrated with the input data. This paper focuses on mining attributes from a Knowledge Graph (KG) for the following reasons. KGs are an emerging type of knowledge representation \cite{akrami2020realistic,zheng2018question,de2016deepdive}, that can effectively organize and represent a large amount of data.
KGs have been efficiently utilized in various tasks, such as question-answering and recommendation~\cite{chen2020review}. Further, attribute names in KGs are typically highly informative, allowing analysts to reason about the generated explanations. 
However, the sheer breadth of coverage that makes KGs potentially valuable also creates the need to automate the process of mining relevant confounding variables. 
There are multiple general-purpose (e.g., Wikidata~\cite{wikidata}, DBpedia~\cite{dbpedia}, Yago~\cite{rebele2016yago}) or domain-specific (e.g., for medical proteomics~\cite{santos2022knowledge}, or protein discovery~\cite{mohamed2020discovering}) KGs that act as central storage for data collected from multiple sources. We argue that such valuable data could be utilized for explaining unexpected correlations observed in user queries in a wide range of scenarios.

To this end, we present an efficient algorithm that finds a subset of confounding attributes (mined from external sources and the input dataset) that explain the unexpected correlation observed in a given query. 
This algorithm is embodied in a system called \sysName, which automatically mines candidate attributes from a knowledge source. This source may be provided by the analyst (for a specific domain) or could be any publicly available knowledge source. 

\begin{example}
Ann uses \sysName\ to search for an explanation for her query. \sysName\ mines all available attributes about countries that appear in her data from DBpedia. She learns that besides \textsc{Confirmed\_cases}, the attributes \textsc{HDI}, and \textsc{GDP} are uncontrolled confounding attributes. She sees that the death rate is similar in countries with a similar number of confirmed cases, HDI, and GDP. 
She is pleased because she found a plausible real-world explanation for her query results~\cite{upadhyay2021correlation,kaklauskas2022effects}. 
\end{example}

Previous work provides explanations for trends
and anomalies in query results in terms of predicates on attributes that are shared by one (group of) tuple in the results but not by another (group of) tuple \cite{el2014interpretable,li2021putting,roy2015explaining,roy2014formal,wu2013scorpion}. However, those methods do not account for correlations among attributes, and are thus inapplicable for explaining the correlation between the outcome and the exposure.
\cite{salimi2018bias} presented a system that provides explanations based on causal analysis, measured by correlation among attributes. However, this system only considers the input dataset, and its running times are exponential in the number of candidate confounding attributes. 
We share with CajaDE~\cite{li2021putting} the motivation of considering explanations that are not solely
drawn from the input table. CajaDE is a system that generates insightful explanations based on contextual
information mined from tables related to the table
accessed by the query. Their explanations are a set of patterns that are unevenly distributed among groups in the query results, and are \emph{independent of the outcome attribute}. Thus, CajaDE may generate explanations 
that are irrelevant for understanding the correlation between the exposure and outcome. 

Our framework supports a rich class of aggregate SQL queries that compare among subgroups, investigating the relationship between an aggregated attribute $O$ and a grouping attribute $T$.  
To explain the correlation between $T$ and $O$ observed in the results of a query $Q$, we formalize the \probName\ problem that seeks a set of confounding attributes (extracted from external sources or the input database), which minimizes the partial correlation between $T$ and $O$ (to measure the correlation between $T$ and $O$, while controlling for the effect of confounding variables).
Further, 
\sysName\ enables analysts to learn the individual responsibility of selected attributes and to automatically identify unexplained data
subgroups (correspond to refinements
of $Q$) in which different explanations are required.


Given an input database $\mathcal{D}$ and a knowledge source, we extract a set of attributes representing additional properties of entities from $\mathcal{D}$.
The attributes are extracted only after the query arrives (as the knowledge source may be a part of the input). Extracted attributes may contain many missing values, especially ones extracted from a KG where data is sparse. 
Previous work showed that common approaches for handling missing data could cause substantial \emph{selection bias}~\cite{seaman2013review} (which occurs when the obtained data fails to properly represent the population intended to be analyzed) if many values are missing~\cite{seaman2013review}. In contrast to prediction, explanations quality is more sensitive to missing data~\cite{mohan2018estimation}. 
We, therefore, present a principled way of handling missing values, ensuring the explanations are robust to missing data. We provide sufficient conditions to detect selection bias and an algorithmic approach to handle it properly.

There are potentially hundreds of attributes that could be extracted from external sources. Thus, there is a need to develop an efficient algorithm to search for the optimal attribute set (i.e., explanation) in this extensive search space. Further, the search for the optimal attribute set involves estimating partial correlation
for high-dimensional conditioning sets, which is notoriously difficult~\cite{li2017feature}. 
To this end, we propose the MCIMR algorithm, which does not require iterating over all possible attribute sets, and avoids estimating high-dimensional
conditioning sets. 
It selects attributes based on Min-Conditional-mutual-Information (a common measure for partial correlation) and Min-Redundancy criteria, 
yielding a PTIME algorithm that finds the optimal $k$-size explanation where $k$ is given. 
We then define a stopping criterion, allowing the algorithm to stop when no further improvement is found. We propose multiple pruning techniques to speed up the computation.

We conduct an experimental study based on four commonly used datasets that evaluate the quality and efficiency of the MCIMR algorithm.
Our approach is effective whenever the explanation can be found in a given knowledge source. 
We show that this was the case in $72.5\%$ of random aggregate queries evaluated on these datasets, setting the knowledge source to be the DBPedia KG~\cite{dbpedia}. 
To evaluate the explanations quality, we focus on $14$ representative queries suffering from confounding bias. These queries
are inspired by real-life analysis reports, such as Stack Overflow
annual reports~\cite{stackoverflowreport} and academic papers~\cite{upadhyay2021correlation}.
We ran a user study consisting of $150$ subjects to evaluate the quality of our explanations compared with six approaches. We show that the explanations generated by MCIMR are almost as good as those of a computationally infeasible na\"ive method that iterates over all attribute subsets and are much better than those of feasible competitors. We also show that previous findings in each domain support our substantive explanations.
Our experimental results demonstrate the robustness of our solution to missing data and indicate the effectiveness of our algorithm in finding explanations in less than $10$s for queries evaluated on datasets containing more than $5$M tuples.

Our main contributions are summarized as follows:
\begin{itemize}[noitemsep,topsep=0pt,parsep=0pt,partopsep=0pt]
\item We formalize the \probName\ problem that seeks a subset of attributes that explains unexpected correlations observed in SQL queries (Section \ref{sec:model}). 
\item We propose to extract unobserved confounding attributes from external sources and focus on KGs. We develop a principled way to avoid selection bias (Section \ref{sec:attribute_extraction}). 
\item We devise an efficient algorithm that computes the optimal explanation for the \probName\ problem. We embody this algorithm in a system called \sysName\, which enables analysts to automatically identify unexplained data
subgroups (Section \ref{sec:MCIMR}).
\item We qualitatively
evaluated the explanations produced by \sysName\ and existing solutions over real-life datasets through a user study. We further conducted performance experiments to assess scalability 
(Section \ref{sec:experiments}). 
\end{itemize}

\vspace{1mm}
Related work is presented in Section~\ref{sec:related} and we conclude
in Section~\ref{sec:conc}. 
 \section{Model and Problem Formulation}\label{sec:model}

\subsection{Data Model}
\label{subsec:model}
\revised{We operate on a standard multi-relational dataset $\mathcal{D}$. To simplify the exposition, we assume $\mathcal{D}$ consists of
a single relational table, however, our definitions and results apply to the general case.}
The table's attributes are denoted by $\mathcal{A}$. For an attribute $A_i$ we denote its domain by $Dom(A_i)$. We use 
bold letters
for sets of attributes $\boldsymbol{A} {\subseteq} \mathcal{A}$. 
We expect the reader is familiar with basic information theory measures, such as entropy and conditional mutual information.  

\revised{Our framework supports a rich class of SQL queries that
involve groping, joins and different aggregations to support complex real-world scenarios. The queries we examine compare among subgroups, investigating the relationship between an aggregated attribute $O$ (referred to as the outcome) and an grouping attribute $T$ (referred to as the exposure).} To simplify the exposition, we assume a single grouping attribute. However, our results can be naturally generalized for multiple grouping attributes. To handle a numerical exposure, one may bin this attribute.
We call the condition
$C$ (given by the WHERE clause) the context for the query. 
Given $C$, we aim to explain the difference
among $agg(O
)$ for each $T {=} t_i$, where $t_i {\in} Dom(T)$. \revised{ If the attribute $O$ belongs to a different table from the one containing the exposure $T$, the query $Q$ describes how these two tables
are combined in the join condition. }




We use the following example based on the Stack Overflow (SO) dataset throughout this paper. In our experiments, we demonstrate the operation of \sysName\ over four datasets, including Covid-Data. 
\begin{example}
\label{ex:running_ex}
SO dataset contains information about people who code around the world, such as their
age, gender, income, and country. 
Consider the following query: 
\begin{center}
    \begin{tabular}{l}
\verb"SELECT Country, avg(Salary)"\\
\verb"FROM SO"\\
\verb"WHERE Continent = Europe"\\
\verb"GROUP BY Country"\\
    \end{tabular}
\end{center}
Here, $O$ is \textsc{Salary}, $T$ is \textsc{Country}, the context $C$ is \textsc{Continent = Europe}, and the aggregation function is average. We aim to explain the difference
in the average salary of developers from each country in Europe. While some attributes from the dataset may partially explain this (e.g., \textsc{Gender, DevType}), other important attributes that can cast light on this difference cannot be found in this dataset. 
\end{example}

\textit{Knowledge Extraction}. 
\revised{In general, \sysName\ can extract attributes from any external source, such as related tables, data lakes, unstructured data (e.g., images), or Knowledge Graphs (KGs), as long as it can be integrated with the input dataset. 
This paper focuses on mining attributes from a KG for the following reasons. First, KGs can effectively organize a large amount of (domain specific or general) data, and have been successfully utilized in various downstream applications, such as question-answering systems,
search engines, and recommendation systems~\cite{chen2020review}. Second, one of the strengths of KGs is that most of the attributes are already reconciled. Namely, we will not have to match different versions of attributes across different entities. Last, the attribute names are typically highly informative, allowing users to reason about the generated explanations. We note that extracting attributes from other sources poses a series of additional challenges, including handling many-to-many relations and uninformative attribute names. We leave these extensions for future research.}

Attributes extracted from a knowledge source may be irrelevant for a given query. 
We thus let the analyst decide which source \sysName\ should use.
Given a knowledge source (e.g., domain specific KG~\cite{santos2022knowledge,mohamed2020discovering}, publicly available KG~\cite{wikidata,dbpedia,rebele2016yago}, data lake), we extract a set of attributes $\mathcal{E}$ representing additional properties of entities from $\mathcal{D}$. 


Continuing with our example, $\mathcal{E}$ could be a set of properties of countries extracted from a KG, such as their density, and HDI.
We can potentially join $\mathcal{E}$ and $\mathcal{T}$, by linking values from $\mathcal{T}$ with their corresponding entities in $\mathcal{G}$ that were used for attributes extraction. However, $\mathcal{E}$ may contain many attributes, most of them are irrelevant for explaining the query results. 


 \subsection{Problem Formulation}
\label{sucsec:problem}

Given a query, the analyst observes an unexpected correlation between the exposure $T$ and the outcome $O$ attributes that she would like to explain.
We assume there is \emph{confounding bias} that causes a spurious association
between $T$ and $O$.  
Confounding bias is a systematic error due to
the uneven or unbalanced distribution of a third variable(s), known
as the confounding variable(s) in the competing groups. Uncontrolled confounding variables 
lead to an inaccurate estimate of the true association between $T$ and $O$.
Our goal is to discover the confounding variables. 
Let $\mathcal{A}$ denote $\mathcal{E}{\cup} \mathcal{T}{\setminus}\{O,T\}$, referred to as the candidate attributes. 
$\mathcal{A}$ contains confounding attributes that affect both $T$ and $O$. We aim to find an attribute set $\boldsymbol{E}{\subseteq} \mathcal{A}$ that control the correlation between $O$ and $T$, i.e., when conditioning on $\boldsymbol{E}$, the correlation between $O$ and $T$ is diminished.
We call such a set the correlation explanation. 

\begin{example}
\label{ex:problem_ex}
It is very likely that countries' economic features (such as GDP, Gini, and
HDI) affect developers' salaries. To unearth the association between \textsc{Country} and \textsc{Salary}, one must measure the correlation while controlling for such attributes. This will allow users to understand which factors affect the differences in developers' salaries. Intuitively, we expect the average developers' salaries to be similar in countries with similar economic characteristics. 
\end{example}

Ideally, we look for a minimal-size set of attributes $\boldsymbol{E} {\subseteq} \mathcal{A}$ s.t: $(O {\independent} T | \boldsymbol{E}, C)$. However, in practice, we may not find such perfect explanations (that entirely explains away the correlation), hence we search for a minimal-size set of attributes that \emph{minimizes the partial correlation between $T$ and $O$}. 
Partial correlation measures the strength of a relationship between two variables, while controlling for the effect of other variables. A common measure of partial correlation is multiple linear regression, which is sensitive only to linear relationship.
Other partial correlation measures, such as Spearman's coefficient, are more sensitive to nonlinear relationships \cite{croxton1939applied, esmailoghli2021cocoa}.
Here we use \emph{Conditional Mutual Information} (CMI),
a common measure of the mutual dependence between two variables, given the value of a third. 
We chose CMI because
(1) it is a widely used non-parametric measure for partial correlation~\cite{chandrashekar2014survey}, (2) there is a plethora of techniques for estimating it from data~\cite{salimi2018bias}, (3) it also allows us to develop information-theoretic optimizations. CMI may suffer from underestimation, especially when quantifying dependencies among variables with high associations~\cite{zhao2016part}. However, we avoid such cases since, as we explain in Section \ref{subsec:optimizations}, we discard all attributes that are logically dependent on $T$ or $O$.
Note that $(O {\independent} T | \boldsymbol{E}, C)$ holds iff $I(O;T | \boldsymbol{E}, C) {=} 0$, where $I(O;T|\boldsymbol{E},C)$ is the mutual information of $O$ and $T$ while conditioning on $\boldsymbol{E}$ and the context $C$. 
Thus, we formalize the \probName\ problem as follows: 

\begin{definition}[\textbf{\probName}]
\label{problem:problem1}
Given a set of candidate attributes $\mathcal{A}$ and a query $Q$, find a set of attributes $\boldsymbol{E}^*$ s.t.: $\boldsymbol{E}^* = argmin_{\boldsymbol{E} {\subseteq} \mathcal{A}} I(O;T|\boldsymbol{E},C){\cdot}|\boldsymbol{E}|$. 
\end{definition}

Following previous work~\cite{roy2014formal,pradhan2021interpretable,DBLP:journals/pvldb/LockhartPWW021}, besides the explanatory power, we also consider the cardinality of the sets. 



\begin{example}
\label{ex:running_ex3}
Among other attributes, we extracted from a KG the \textsc{Gini} ($E_1$), \textsc{Density} ($E_2$), and \textsc{HDI} ($E_3$) attributes. An attribute from SO is the developers \textsc{Gender} ($E_4$).
According to our data, we have $I(O;T|C) {=} 2.6$. When conditioning on $E_1$, we get: $I(O;T|C, E_1) {=} 1.3$. Namely, 
in countries with a similar Gini index, there is less correlation between the country of developers and their salaries.
When also considering \textsc{Density}, we get: $I(O;T|C, E_1, E_2) {=} 0.03$. Thus, 
this set of attributes explains away the correlation in $Q_{so}$.
When conditioning on \textsc{HDI}, on the other hand, we get: $I(O;T|C, E_3) {=} 2.5$. Since the HDI of all countries in Europe is similar\footnote{As reflected in \url{https://en.populationdata.net/rankings/hdi/europe/}.}, this attribute does not explain the observed correlation.
Similarly, when conditioning on \textsc{Gender} we get: $I(O;T|C, E_4) {=} 2.3$, implying that the developers gender cannot explain the correlation in $Q_{so}$.
\end{example}

To
assist analysts in interpreting the
results, we enable them to learn the individual \emph{responsibility} of selected attributes.
Given an explanation $\boldsymbol{E}$, we
rank the attributes in $\boldsymbol{E}$ in terms of
their responsibilities as follows:
\begin{definition}[Degree of responsibility]
\label{def:resp}
Given a query $Q$ and set of attributes $\boldsymbol{E}$, the degree of responsibility of an attribute $E_i {\in} \boldsymbol{E}$ is defined as follows:
$$Resp(E_i) := \frac{I(O;T|\boldsymbol{E} {\setminus} \{E_i\}, C)- I(O;T|\boldsymbol{E}, C)}{\sum_{E_j \in \boldsymbol{E}}(I(O;T|\boldsymbol{E} {\setminus} \{E_j\}, C) -I(O;T|\boldsymbol{E},C))}$$
\end{definition}
The responsibility of an attribute $E_i$ is the normalized value of its individual contribution. 
When all attributes in $\boldsymbol{E}$ contribute to the explanations (i.e., the numerator is positive), the denominator is non-negative. 
The responsibility of $E_i$ is positive if $E_i$ contributes to the explanation. 
Thus, a negative responsibility indicates that $E_i$ only harms the explanation (it happens since $E_i$ has a negative interaction information with $O$ and $T$). The higher the responsibility of an attribute, the greater is its individual explanation power.

\begin{example}
\label{ex:resposibility}
Recall that $E_1 {=} $ \textsc{Gini}, and $E_2{=}$ \textsc{Density}. Let $\boldsymbol{E} {=} \{E_1, E_2\}$. 
According to our data we have: $I(O;T|C, E_2) {=} 1.51$. We get:
$Resp(E_1) {=} 0.54$, and
$Resp(E_2) {=} 0.46
$.
The attribute \textsc{Hobby} ($E_5$) indicates whether a developer is coding as an hobby. It has a negative interaction information with $O$ and $T$. We have $I(O;T|C, E_5) {=} 2.7$ ${>} I(O;T|C)$. Let $\boldsymbol{E} {=} \{E_1, E_5\}$. We get: $I(O;T|C,\boldsymbol{E}) {=} 1.5$, $
    Resp(E_1) {=} 1.2$, and
$Resp(E_5) {=} -0.2
$. 
Since $E_5$ did not contribute to the explanation, its responsibility is negative. 
\end{example}

\textit{Key Assumption}.
We generally believe that attributes with low responsibility are of little interest to analysts and that XOR-like explanations (in which the explanation power of each individual attribute is low, but their combination makes a good explanation) are hard to understand; thus, they are less likely to be considered good explanations. Our view is motivated by~\cite{lombrozo2007simplicity}. A similar assumption is often made in feature selection \cite{tsamardinos2003algorithms,brown2008markov}, where they assume the optimal feature set does not contain multivariate associations among features, which are individually irrelevant to a target class but become
relevant in the presence of others. We further believe true XOR phenomena are likely to be uncommon in real datasets; the practical success of feature selection methods that make this assumption~\cite{chandrashekar2014survey} is some evidence for this view. 
Further, generating XOR explanations would be a substantial additional technical challenge. It would eliminate our ability to prune low-relevance attributes and to define a stopping criterion for our algorithm (see Section \ref{sec:MCIMR}). Also, extending our algorithm to consider XOR explanations would mean estimating CMI for a high-dimensional conditioning set, which is notoriously difficult~\cite{li2017feature}.



\section{Attributes Extraction}
\label{sec:attribute_extraction}

\subsection{Extracting the Candidate Attributes}
\label{subsec:extraction}
\revised{\sysName\ extracts attributes representing additional properties of entities from $\mathcal{D}$ from a given knowledge source. 
In general, we may extract attributes from any given source as long as it can be integrated with the input dataset. For example, we may extract attributes from a data lake, leveraging existing methods to join or union an input table with other tables~\cite{zhang2020finding, nargesian2018table,zhu2019josie,esmailoghli2021cocoa,santos2021correlation}.
As mentioned in Section \ref{subsec:model}, here we focus on extracting attributes from a given KG.}

\textbf{Extracting Attributes from a KG}: 
Given a KG, the first step is to map values that appear in the table $\mathcal{T}$ to their corresponding unique entities in the KG $\mathcal{G}$. This task is often referred to as the Named Entity Disambiguation (NED) problem~\cite{parravicini2019fast}. We can use any off-the-shelf NED algorithm (e.g., \cite{parravicini2019fast,zhu2018exploiting}) to match any non-numerical value in $\mathcal{T}$ to an entity in $\mathcal{G}$. 
Next, given an entity from $\mathcal{T}$, we extract all of its properties from $\mathcal{G}$.  
We then organize all the extracted properties into a table, setting a null value to all properties whose values were missing. 
This process is equivalent to building the \emph{universal relation}~\cite{fagin1982simplied} out of all of the entity specific relations that were derived from $\mathcal{G}$. 

To extract more attributes and potentially improve the explanations, one may "follow" links in $\mathcal{G}$. Namely, extract also properties of values which are entities in $\mathcal{G}$ as well. This process can be done up to any number of hops in $\mathcal{G}$. All properties are then flattened and stored as a single table.

\textbf{Accommodating One-to-Many Relations}: 
\revised{The process described above assumes
that each entity is associated with a single value. However, real-world data often contain multiple categorical values (see Example \ref{ex:data_extraction}). Because correlation
is only defined for sets of paired values, downstream applications typically aggregate the values into a single number~\cite{santos2021correlation}. \sysName\ supports any user-defined function (e.g., mean, sum, max, first, or any representation-learning-based technique~\cite{bengio2013representation}) to perform the aggregation. }

\begin{example}
\label{ex:data_extraction}
A country's leader is an attribute extracted for each country. We can extract properties of the leaders, such as their age and gender, adding to $\mathcal{E}$ additional properties such as \textsc{Leader Age}, and \textsc{Leader Gender}. Other properties may point to multiple entities. The \textsc{US} entity has the property \textsc{Ethnic-Group}, which points to different ethnic groups. Each ethnic group is also an entity, and has the property \textsc{Population size}. One may add the property \textsc{Avg Population size of Ethnic-Group} to $\mathcal{E}$ by averaging the population sizes.   
\end{example}

\subsection{Handling Missing Data}
\label{subsec:IPW}
\revised{Extracted attributes, especially ones from KGs where data is sparse, may contain missing values. Our goal is to develop a principled approach to ensure the generated explanations are robust to missing data. }
Handling missing data is an enduring problem for many systems \cite{efron1994missing}.
The simplest approach to dealing with missing values is to restrict the analysis to complete cases, i.e., discard cases that have missing values.
However, this can induce \emph{selection bias} if the excluded tuples are systematically different from those included. For example, if the HDI values of only countries with a very high HDI are missing, restricting the analysis only to complete cases may lead to misleading explanations.  
A common solution is to impute missing values.  
Data imputation is unlikely to cause substantial bias if
few data are missing, but bias may increase as the number of missing data increases~\cite{seaman2013review}. 
Another common approach is Multiple Imputations (MI) \cite{patrician2002multiple}.
While MI is useful in supervised learning as long as it leads to models with an acceptable level of accuracy, 
MI makes a
missing-at-random assumption~\cite{efron1994missing}, which is often not the case in our setting.  
The approach that we followed is
Inverse Probability Weighting (IPW), 
a commonly used method to correct selection bias~\cite{seaman2013review}.  
In IPW, we consider only complete cases, but more weight is given to some complete cases than others. 
We next explain how to adapt IPW into our setting.

For simplicity of presentation, we assume that $\mathcal{T}$ and $\mathcal{E}$ have been joined into a single table. 
As we will explain in Section \ref{sec:MCIMR}, for an attribute $E {\in} \mathcal{E}$ we estimate $I(O;T|E,C)$ and $I(E;E')$ for $E' {\in} \mathcal{E}$. Therefore, we need to recover the probabilities $P(O|C,E), P(O|C,T,E), P(E)$, and $P(E|E')$. But since $E$ may contain missing values, we must ensure that those probabilities are \emph{recoverable}.  
Given an attribute $E$, let $R_E$ denote a selection attribute that indicates if the values of $E$ for the $i$-th tuple in the results of $Q$ is missing. I.e., $R_E[i] {=} 1$ if the value of $E$ for the $i$-th tuple was extracted, and $R_E[i] {=} 0$ otherwise. A complete cases analysis means that we examine only cases in which $R_E[i] {=} 1$.
Let $R_E {=} 1$ denote the selection of all tuples in which for them $R_E[i] {=} 1$ holds.  
We say the probability of an event $X$ which involves $E$ (e.g., $P(O|E)$) is recoverable if:
 $P(X) {=} P(X|R_E {=} 1)$.

We prove that $I(O;T|C,E)$
is recoverable if the complete cases are a representative sample of the original data, and each complete case is a random sample from the population of individuals
with the same $E$ and $T$ values. 
 \begin{proposition}
If $(O\independent R_E=1|E,C)$ and $(O\independent R_E=1|E,T,C)$, then $I(O;T|C,E)$ is recoverable. 
 \end{proposition}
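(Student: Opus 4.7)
The plan is to decompose $I(O;T\mid C,E)$ into two conditional distributions of $O$ and show that each is made recoverable by exactly one of the two hypotheses, then reconcile the outer expectation. I would start from the identity
\[
I(O;T\mid C,E)\;=\;E\!\left[\log\tfrac{P(O\mid T,C,E)}{P(O\mid C,E)}\right],
\]
so the proof reduces to (i) recovering the two conditionals $P(O\mid C,E)$ and $P(O\mid T,C,E)$ from the subset of tuples with $R_E{=}1$, and (ii) evaluating the outer expectation on that subset.

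For step (i), the first hypothesis $O\independent R_E{=}1\mid E,C$ is, by the definition of conditional independence, precisely $P(O\mid E,C)=P(O\mid E,C,R_E{=}1)$, which recovers the denominator of the log-ratio. Analogously, the second hypothesis $O\independent R_E{=}1\mid E,T,C$ yields $P(O\mid E,T,C)=P(O\mid E,T,C,R_E{=}1)$, recovering the numerator. Hence the integrand coincides whether computed from the full joint law or from the law conditioned on $R_E{=}1$, and this already handles the $H(O\mid C,E)$ and $H(O\mid T,C,E)$ pieces of the equivalent decomposition $I(O;T\mid C,E)=H(O\mid C,E)-H(O\mid T,C,E)$.

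The main obstacle is step (ii): the outer expectation is with respect to $P(O,T,C,E)$, whereas a complete-case average corresponds to $P(O,T,C,E\mid R_E{=}1)$, and the two need not coincide because $R_E$ may still depend on $(T,C,E)$. I would close this gap using the IPW machinery introduced earlier in the section: hypothesis (A2) is the standard missing-at-random condition $O\independent R_E\mid T,C,E$, which licenses inverse-probability weights $w_i=1/\Pr(R_E{=}1\mid T_i,C_i,E_i)$ applied to each complete case and yields consistent estimators of arbitrary functionals of the joint law. Combining the conditionals recovered in step (i) with the IPW-corrected outer expectation in step (ii) expresses $I(O;T\mid C,E)$ entirely in terms of quantities estimable from $\{i:R_E[i]{=}1\}$, establishing recoverability. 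The role of (A1) in the argument is to ensure that the $H(O\mid C,E)$ marginal on the $O$-side of the decomposition also lines up with its complete-case analogue, so no additional correction is required beyond the IPW reweighting guaranteed by (A2).
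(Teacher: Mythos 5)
Your step (i) is exactly the paper's entire proof: the paper writes $I(O;T|E,R_E{=}1,C) = H(O|E,R_E{=}1,C)-H(O|T,E,R_E{=}1,C)$ and uses the two conditional-independence hypotheses (via the identity $X\independent Y\mid Z \Rightarrow H(Y|X,Z)=H(Y|Z)$) to replace each entropy term by its counterpart without the $R_E{=}1$ conditioning, concluding $I(O;T|E,R_E{=}1,C)=I(O;T|E,C)$. So the recovery of $P(O\mid C,E)$ from the first hypothesis and of $P(O\mid T,C,E)$ from the second is precisely the intended argument (note that your closing remark misassigns the roles: the two hypotheses act symmetrically, one per entropy term, rather than the first merely ``lining up'' a marginal).

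Where you diverge is step (ii), and this is where the proposal runs into trouble. The proposition's claim is that $I(O;T|C,E)$ is \emph{recoverable}, which the paper defines as the complete-case quantity equaling the true quantity \emph{without} any reweighting; IPW is explicitly introduced only for ``situations other than described above,'' i.e., when recoverability fails. If your argument genuinely needs inverse-probability weights to reconcile the outer expectation, then what you have proven is ``estimable via IPW,'' not ``recoverable,'' and the proposition as stated would not follow. Moreover, the weights $1/\Pr(R_E{=}1\mid T,C,E)$ you invoke condition on $E$, the very variable that is missing when $R_E{=}0$, so they are not identifiable from the observed data without further modeling assumptions (the paper's logistic-regression workaround is applied only in the non-recoverable case). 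That said, the underlying concern you raise is legitimate: conditioning on $R_E{=}1$ changes the mixing distribution $P(E,C)$ over which the inner entropies are averaged, and the two hypotheses control only the inner conditional laws of $O$, not that mixture. This is a real gap in the paper's own two-line proof (or, more charitably, an indication that ``recoverable'' is meant only for the constituent conditionals $P(O|C,E)$ and $P(O|C,T,E)$). The correct resolution is either to adopt that weaker reading, under which your step (ii) is unnecessary, or to add an assumption such as $(E,C)\independent R_E$ that fixes the outer mixture --- not to import IPW, which changes what is being claimed.
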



We prove $I(E;E')$ 
is recoverable if the completeness of a case is independent of $E$, and remains independent given $E'$. 
 
\begin{proposition}
If $(E_i{\independent} R_{E_i}{=}1, R_{E_j} {=} 1)$ and $(E_i{\independent} R_{E_i}{=}1, R_{E_j} {=} 1|E_j)$, then $I(E;E')$ is recoverable.
 \end{proposition}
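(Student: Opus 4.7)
The plan is to expand the mutual information into a form whose every factor is recoverable from the complete-case data, which here means the subpopulation where both $R_{E_i}{=}1$ and $R_{E_j}{=}1$ hold. Write $S$ for the joint event $(R_{E_i}{=}1, R_{E_j}{=}1)$. Starting from
$$ I(E_i; E_j) = \sum_{e_i,e_j} P(E_i,E_j) \log \frac{P(E_i\mid E_j)}{P(E_i)}, $$
it suffices to show that each of $P(E_i)$, $P(E_j)$, and $P(E_i\mid E_j)$ equals the corresponding quantity computed under the conditioning event $S$, since the latter is exactly what the complete-case analysis estimates.

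First, I would invoke the marginal hypothesis $E_i \independent S$ to obtain $P(E_i) = P(E_i \mid S)$, so the marginal of $E_i$ in the complete cases coincides with the target marginal. Second, the conditional hypothesis $E_i \independent S \mid E_j$ gives $P(E_i \mid E_j) = P(E_i \mid E_j, S)$, so the conditional distribution can also be read off from the complete cases. With these two identities, reconstructing the joint $P(E_i, E_j) = P(E_i \mid E_j)\,P(E_j)$ reduces to recovering the single marginal $P(E_j)$.

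The hard step is exactly this last one: the two stated hypotheses do not, taken in isolation, imply $P(E_j) = P(E_j \mid S)$, because $S$ may depend on $E_j$ in ways that are not constrained by conditions about $E_i$. The cleanest resolution is to read the proposition symmetrically in its two arguments, which is justified because $I(E;E')$ is itself symmetric: applying the same hypotheses with $E_i$ and $E_j$ swapped immediately yields $P(E_j) = P(E_j \mid S)$. Alternatively, one can derive $P(E_j) = \sum_{e_i} P(E_i{=}e_i)\, P(E_j \mid E_i{=}e_i)$ after establishing $P(E_j \mid E_i) = P(E_j \mid E_i, S)$ by the swapped version of the conditional assumption.

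With every probability term shown equal to its complete-case counterpart, substituting back gives
$$ I(E_i; E_j) = \sum_{e_i,e_j} P(E_i,E_j \mid S)\, \log \frac{P(E_i \mid E_j, S)}{P(E_i \mid S)}, $$
a quantity fully computable from the tuples with $R_{E_i}{=}1$ and $R_{E_j}{=}1$. This is precisely what it means for $I(E;E')$ to be recoverable.
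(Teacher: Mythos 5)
Your argument is in substance the same as the paper's. The paper's proof writes $I(E_i;E_j\mid R_{E_i}{=}1,R_{E_j}{=}1)=H(E_i\mid R_{E_i}{=}1,R_{E_j}{=}1)-H(E_i\mid E_j,R_{E_i}{=}1,R_{E_j}{=}1)$ and uses the two independence hypotheses to replace the two terms by $H(E_i)$ and $H(E_i\mid E_j)$; that is exactly your replacement of $P(E_i)$ by $P(E_i\mid S)$ and of $P(E_i\mid E_j)$ by $P(E_i\mid E_j,S)$ inside the log-ratio, just phrased with entropies instead of probabilities. Where you diverge is instructive: you correctly observe that neither hypothesis controls the marginal of $E_j$, so the outer expectation in the complete-case estimate is taken with respect to $P(E_j\mid S)$ rather than $P(E_j)$. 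The paper's proof crosses this same gap silently at the step $H(E_i\mid E_j,R_{E_i}{=}1,R_{E_j}{=}1)=H(E_i\mid E_j)$: that conditional entropy is the average of $H(E_i\mid E_j{=}e_j)$ weighted by $P(e_j\mid S)$, and equality with $H(E_i\mid E_j)$ requires $P(E_j\mid S)=P(E_j)$ (or constant per-value entropies), which the two stated conditions do not imply. Your fix --- invoking the hypotheses with $E_i$ and $E_j$ interchanged --- is not licensed by the literal premises (symmetry of the conclusion does not symmetrize the assumptions), but it is the natural reading of the proposition's intent and is what is actually needed to make either proof airtight. Net: same decomposition and same use of the two independences, but your version makes explicit a reweighting assumption that the paper's proof leaves implicit.
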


 

In situations other than described above, the probabilities will generally not be
recoverable. Following the IPW approach, we assign weights to complete cases, where the weight $W(X)$ of an event $X$ is 
defined as
$
 W(X) {=}$ $P(R_E{=}1)/P(R_E{=}1|X)   
$.
However, since $E$ contains missing values, $P(X)$ is unknown.  
We thus estimate $P(X)$. Commonly, a logistic regression model is fitted \cite{kang2007demystifying,hinkley1985transformation}.
Data available for this are the
values of the attributes in $\mathcal{D}$. We therefore employ a logistic regression (at pre-processing) to estimate $P(X)$. We note that although, as in MI, we predict missing values, 
we only use those predicted values for weights computation and not for the entire analysis. 
\vspace{-1mm}
\section{Algorithms}
\label{sec:MCIMR}

\vspace{-1mm}
\subsection{The MCIMR Algorithm}
\label{subsec:mcimr}
We present the MCIMR 
algorithm for the \probName\ problem. 
We show that MCIMR is 
a PTIME algorithm that finds the optimal $k$-size solution where $k$ is given. 
We then define a stopping criterion, allowing it to stop when no further improvement is found. 

\vspace{1mm}
When $k$ equals $1$, the optimal solution to \probName\ is the attribute $E {\in} \mathcal{A}$
that minimizes $I(O;T|C,E)$. When $k {\geq} 1$, a simple
incremental solution is to add one attribute at a time:
Given the explanation obtained at the $(k{-}1)$-th iteration $\boldsymbol{E_{k-1}}$, the $k$-th attribute to be added, denoted as $E_k$, is the one that contributes to the largest
decrease of $I(O;T|C,\boldsymbol{E_{k-1}})$. Formally,
\begin{align}\label{eq:obj}
E_k = argmin_{E {\in} \mathcal{A} \setminus \boldsymbol{E_{k-1}}} I(O;T|C,\boldsymbol{E_k})
\end{align}
where $\boldsymbol{E_k} {=} \boldsymbol{E_{k-1}} {\cup} \{E_k\}$.

It is difficult to get an accurate estimation for multivariate mutual information~\cite{peng2005feature}, as in Equation (1).
Instead, MCIMR calculates only bivariate probabilities, which is much more accurate, by incrementally selecting attributes based on Minimal-Conditional-mutual-Information (MCI) and Minimal-Redundancy (MR) criteria. 

The idea behind MCI is to search a $k$-size attribute set $\boldsymbol{E_k}$ that satisfies Equation \ref{eq:min_cmi}, which approximates Equation \ref{eq:obj}
with the mean value of all CMI values
between the individual attributes in $\boldsymbol{E_k}$ and $O$ and $T$:
\begin{align}\label{eq:min_cmi}
\boldsymbol{E_k} = argmin_{\boldsymbol{E_k} \subseteq \mathcal{A}} CI(O,T,C,\boldsymbol{E_k})
\end{align}
where $CI(O,T,C,\boldsymbol{E_k}) {=} \frac{1}{k} \sum_{E{\in} \boldsymbol{E_k}} I(O;T|C,E)$.

However, it is likely that attributes selected according to MCI are redundant. 
Thus, the following minimal redundancy condition is added: 
\begin{align}\label{eq:min_red}
\boldsymbol{E_k} = argmin_{\boldsymbol{E_k} {\subseteq} \mathcal{A}} Rd(\boldsymbol{E_k})
\end{align}
where $Rd(\boldsymbol{E_k}) = \frac{1}{k^2} \sum_{E_i, E_j{\in} \boldsymbol{E_k}} I(E_i;E_j)$.

Our goal is to minimize CI and Rd
simultaneously. Namely, we look for a $k$-size attribute set $\boldsymbol{E_k^*} {\subseteq} \mathcal{A}$ such that:
\begin{align}\label{eq:obj_mrmr}
\boldsymbol{E_k^*} = argmin_{\boldsymbol{E_k} \subseteq \mathcal{A}}\text{  } [CI(O,T,C,\boldsymbol{E_k}) + Rd(\boldsymbol{E_k})]
\end{align}

The MCIMR algorithm selects attributes incrementally as follows (as defined in Equation \ref{eq:obj_mrmr}). In the $k$-th iteration we
have the $k{-}1$-size attribute set $\boldsymbol{E_{k{-}1}}$.
The $k$-th attribute to be added is the attribute that minimizes the following condition:
\begin{align}
\label{eq:condition}
E_k {=} argmin_{E {\in} \mathcal{A}{\setminus} \boldsymbol{E_{k{-}1}}}[I(O;T|C,E) {+} \frac{1}{k{-}1}\sum_{E_i {\in} \boldsymbol{E_{k{-}1}}}\!\!\!\!I(E;E_i)]
\end{align}
We prove that the combination of the MCI and MRd criteria is equivalent to Equation \ref{eq:obj}. \revised{Namely, the MCIMR algorithm correctly computes the optimal $k$-size solution.}
\begin{theorem}
\label{thr:min_cmi_min_red}
The MCIMR algorithm yields the optimal $k$-size solution to Equation~\ref{eq:obj}.
\end{theorem}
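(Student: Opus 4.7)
The plan is to follow the strategy pioneered by Peng, Long, and Ding for mRMR feature selection, adapted from plain mutual information to conditional mutual information with the context $C$. The goal is to show that, at the $k$-th step of the greedy procedure in Equation~\ref{eq:obj}, the attribute minimizing the multivariate objective $I(O;T \mid C, \boldsymbol{E_k})$ coincides with the attribute minimizing the first-order surrogate in Equation~\ref{eq:condition}.

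First, I would decompose the multivariate CMI via the chain rule of mutual information. Writing $\boldsymbol{E_k} = \boldsymbol{E_{k-1}} \cup \{E_k\}$, one obtains
\begin{equation*}
I(O;T \mid C, \boldsymbol{E_k}) \;=\; I(O;T \mid C, E_k) \;-\; \Delta(E_k, \boldsymbol{E_{k-1}} \mid C, O, T),
\end{equation*}
where $\Delta$ gathers the interaction terms between $E_k$ and the previously selected attributes given $\{C,O,T\}$. The first summand is exactly the MCI term in Equation~\ref{eq:condition}. I would then argue that, under the paper's key assumption ruling out XOR-like multivariate interactions (explicitly invoked at the end of Section~\ref{sucsec:problem}), the interaction residual $\Delta$ reduces to a sum of pairwise terms of the form $I(E_k;E_i)$ for $E_i \in \boldsymbol{E_{k-1}}$, up to a common offset that does not depend on the candidate $E_k$. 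This is where the averaging factor $1/(k-1)$ in Equation~\ref{eq:condition} comes from: expanding $\Delta$ symmetrically over the $k{-}1$ previously selected attributes gives exactly this normalization.

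Next, I would combine the two steps. Since $I(O;T \mid C, \boldsymbol{E_{k-1}})$ is constant at step $k$, greedy minimization of $I(O;T \mid C, \boldsymbol{E_k})$ is equivalent to jointly minimizing the MCI term $I(O;T \mid C, E_k)$ and the averaged pairwise redundancy $\tfrac{1}{k-1}\sum_{E_i \in \boldsymbol{E_{k-1}}} I(E_k;E_i)$. Summing the objectives from step $1$ through step $k$ and dividing by $k$ recovers the aggregate form $CI(O,T,C,\boldsymbol{E_k}) + Rd(\boldsymbol{E_k})$ in Equation~\ref{eq:obj_mrmr}, establishing that the incremental rule in Equation~\ref{eq:condition} is precisely the greedy update for the objective in Equation~\ref{eq:obj}. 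Finally, since both selection rules agree on the attribute chosen at every step, an induction on $k$ (with the base case $k=1$ being trivial because MCIMR reduces to minimizing $I(O;T \mid C, E)$) yields the claim.

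\textbf{Main obstacle.} The delicate step is justifying that the residual $\Delta$ collapses to a sum of pairwise $I(E_k;E_i)$ terms. Generally this decomposition is only exact when higher-order interaction information vanishes, i.e., when no XOR-like dependencies exist among $O$, $T$, and the chosen attributes; otherwise it is an approximation. I would therefore state the theorem under the assumption, already adopted in Section~\ref{sucsec:problem} and justified on the same grounds as classical mRMR, that such higher-order interactions are absent (or equivalently that the candidate attributes form a Markov-blanket-like structure around $\{O,T\}$ given $C$). The rest of the proof is then a careful bookkeeping exercise on CMI chain-rule identities and an induction on the step index $k$.
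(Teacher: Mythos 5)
Your plan is workable at the same level of rigor as the paper's own argument, but it takes a genuinely different route. The paper does not use the chain rule of mutual information at all: following Peng et al.\ directly, it writes the exact identity $I(O;T\mid \boldsymbol{E_k}) = H(O,\boldsymbol{E_k}) + H(T,\boldsymbol{E_k}) - H(O,T,\boldsymbol{E_k}) - H(\boldsymbol{E_k})$, re-expresses each joint entropy through the total-correlation functional $J(\cdot)$, and reduces the objective to $J(O,T,\boldsymbol{E_k}) + J(\boldsymbol{E_k}) - J(O,\boldsymbol{E_k}) - J(T,\boldsymbol{E_k})$. It then argues that $J(O,\boldsymbol{E_k})$ is maximized when $E_k$ is maximally dependent on $O$ (the Min-CMI criterion, which also forces $J(O,T,\boldsymbol{E_k}) = J(T,\boldsymbol{E_k})$) and that $J(\boldsymbol{E_k})$ is minimized under pairwise independence of the selected attributes (the Min-Redundancy criterion). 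Your decomposition $I(O;T\mid C,\boldsymbol{E_k}) = I(O;T\mid C,E_k) - \Delta$ is a valid identity once $\Delta$ is defined as the difference, and your framing makes the first-order nature of the approximation more transparent than the paper's, since you explicitly isolate the multivariate interaction residual and tie it to the no-XOR assumption. What the paper's route buys is an exact algebraic reduction to four $J$-terms before any approximation is invoked; what your route buys is a cleaner statement of precisely which higher-order quantity must vanish.

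The one concrete gap in your write-up is the claim that $\Delta$ collapses to $\tfrac{1}{k-1}\sum_{E_i\in\boldsymbol{E_{k-1}}} I(E_k;E_i)$ up to a candidate-independent offset, with the $1/(k-1)$ normalization ``coming from expanding $\Delta$ symmetrically.'' This is asserted, not derived, and it does not follow from the paper's informal no-XOR assumption; in the paper the $1/(k-1)$ weight (and the $1/k^2$ in $Rd$) originates from the \emph{definitions} of the averaged surrogates $CI$ and $Rd$ in Equations~\ref{eq:min_cmi}--\ref{eq:min_red}, not from an information-theoretic identity. You should also check the sign: higher redundancy $I(E_k;E_i)$ must translate into a \emph{larger} value of $I(O;T\mid C,\boldsymbol{E_k})$ for the incremental rule in Equation~\ref{eq:condition} to be the correct greedy update, and your decomposition as written leaves the sign of $\Delta$'s dependence on redundancy ambiguous. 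To be fair, the paper's own step from ``each $J$-term is separately extremized'' to ``the sum is minimized'' is an analogous leap inherited from the original mRMR argument, so your proposal is not weaker than the published proof --- but if you want to close the argument by your route, the collapse of $\Delta$ is the step that needs an actual derivation or an explicitly stated hypothesis under which it is exact.
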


\setlength{\textfloatsep}{2px}
\SetInd{1.0ex}{1.0ex}
\begin{algorithm}[t]
\scriptsize
	\DontPrintSemicolon
	\SetKwInOut{Input}{input}\SetKwInOut{Output}{output}
	\LinesNumbered
	\Input{A number $k$, a set of attributes $\mathcal{A}$, the outcome, treatment attributes $O$ and $T$, and the context $C$}
	\Output{An explanation $\boldsymbol{E}$.} \BlankLine
	\SetKwFunction{NextBestAtt}{\textsc{NextBestAtt}}
	
\textsc{MCIMR}($k, \mathcal{A}, O,T,C$):\\	
  $\boldsymbol{E} \gets \emptyset$.\\
   \For{ $i \in [1,k]$ }
   {
   $E_i \gets $ \NextBestAtt($O,T,C,\boldsymbol{E}, \mathcal{A}$)\\
   \If{$O \independent E_i | \boldsymbol{E}$ \tcp*{\textcolor{blue}{The responsibility test for $E_i$}}} 
   {
    \Return $\boldsymbol{E}$\\
   }
   $\boldsymbol{E} \gets \boldsymbol{E} \bigcup \{E_i\}$\\
   }

 \Return $\boldsymbol{E}$\\
 
\NextBestAtt($O,T,C,\boldsymbol{E}, \mathcal{A}$):\\
$E^* \gets $None, $v \gets \infty$\\
\ForEach{$E \in \mathcal{A}\setminus \boldsymbol{E}$}{
  \tcc{\textcolor{blue}{Weights are added if selection bias was detected}}
$v_1 \gets I(O;T|C,E), v_2 \gets 0$ \tcp*{\textcolor{blue}{Min CI computation}}
\ForEach{$E' \in \boldsymbol{E}$}
{
  \tcc{\textcolor{blue}{Weights are added if selection bias was detected}}
$v_2 \gets v_2 + I(E;E')$ \tcp*{\textcolor{blue}{Min redundancy computation}}
}
\If{$v_1 + \frac{v_2}{|\boldsymbol{E}|} < v$}{
$E^* \gets E, v \gets v_1 + \frac{v_2}{|\boldsymbol{E}|}$\\
}
}
\Return $E^*$\\
	\caption{The MCIMR Algorithm.}\label{algo:mcimr}
\end{algorithm}

\vspace{1mm}
\textbf{Stopping Criteria}. 
\label{subsec:findingk}
Up until this point we assumed that the size of the explanation $k$ is given. 
However, given two consecutive solutions of sizes $k$ and $k{+}1$, we can not say if $I(O;T|C,\boldsymbol{E_k}) {<}$ $I(O;T|C,\boldsymbol{E_{k+1}})$ or vice versa. 
As mentioned, we assume that
attributes in which their marginal explanation power is small are of no interest to analysts. 
We thus stop the algorithm after the first iteration in which the responsibility of the new attribute to be added is ${\approx} 0$. 
Namely, we treat $k$ as an upper bound on the explanation size.
To this end, we propose the \emph{responsibility test}. Given the set of selected attributes $\boldsymbol{E_k}$, this test verifies if the responsibility of a candidate attribute $E_{k+1}$ is ${\approx} 0$. 
\begin{lemma}[Responsibility test]
If $O {\independent} E_{k+1}|\boldsymbol{E_k}$ then $Resp(E_{k+1}) {\leq} 0$.
\end{lemma}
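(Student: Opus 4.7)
The plan is to show that the numerator of $\text{Resp}(E_{k+1})$ is non-positive under the stated conditional independence, and then appeal to the denominator being non-negative (which is the standard regime under which the responsibility is meaningful, as already noted by the authors just below Definition~\ref{def:resp}). The key identity I would use is the chain rule of (conditional) mutual information, applied in two ways to $I(O;T,E_{k+1}\mid \boldsymbol{E_k},C)$.

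Concretely, let $\boldsymbol{E}=\boldsymbol{E_k}\cup\{E_{k+1}\}$ and consider the contribution of $E_{k+1}$ to the numerator,
\begin{equation*}
\Delta \;=\; I(O;T\mid \boldsymbol{E_k},C)\;-\;I(O;T\mid \boldsymbol{E_k},E_{k+1},C).
\end{equation*}
Expanding $I(O;T,E_{k+1}\mid \boldsymbol{E_k},C)$ by grouping $T$ first and then $E_{k+1}$, and alternatively $E_{k+1}$ first and then $T$, yields
\begin{equation*}
I(O;T\mid \boldsymbol{E_k},C) + I(O;E_{k+1}\mid T,\boldsymbol{E_k},C)
= I(O;E_{k+1}\mid \boldsymbol{E_k},C) + I(O;T\mid \boldsymbol{E_k},E_{k+1},C),
\end{equation*}
so that
\begin{equation*}
\Delta \;=\; I(O;E_{k+1}\mid \boldsymbol{E_k},C) \;-\; I(O;E_{k+1}\mid T,\boldsymbol{E_k},C).
\end{equation*}
Now the hypothesis $O \independent E_{k+1}\mid \boldsymbol{E_k}$ (read, consistently with the algorithmic test in line~5 of Algorithm~\ref{algo:mcimr}, as also holding in context $C$, since $C$ is a fixed selection predicate that is effectively absorbed into $\boldsymbol{E_k}$) gives $I(O;E_{k+1}\mid \boldsymbol{E_k},C)=0$. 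Since conditional mutual information is non-negative, the remaining term $I(O;E_{k+1}\mid T,\boldsymbol{E_k},C)\ge 0$, and therefore $\Delta\le 0$.

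It remains to translate $\Delta\le 0$ into $\text{Resp}(E_{k+1})\le 0$. The numerator of $\text{Resp}(E_{k+1})$ in Definition~\ref{def:resp} is exactly $\Delta$. The denominator is the sum over all $E_j\in\boldsymbol{E}$ of the analogous marginal contributions; in the operational regime of the algorithm, the previously selected attributes $E_j\in\boldsymbol{E_k}$ were included because each of them strictly decreased the CMI (that is what MCIMR's greedy step guarantees), so their individual contributions are positive and the denominator is strictly positive. Combined with $\Delta\le 0$, this yields $\text{Resp}(E_{k+1})\le 0$, as claimed.

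The main obstacle I expect is the small gap between the information-theoretic statement $O\independent E_{k+1}\mid \boldsymbol{E_k}$ and the precise quantity appearing in $\text{Resp}$, which conditions on $C$ as well. I would address this by explicitly noting that $C$ is a deterministic selection and hence the independence lifts to the conditional distribution on the tuples satisfying $C$, so $I(O;E_{k+1}\mid \boldsymbol{E_k},C)=0$ follows. A secondary subtlety is the denominator's sign: strictly speaking the lemma should be stated as ``contribution is $\le 0$,'' but the argument above shows it is consistent with the definition whenever the previously chosen attributes have genuine (positive) marginal contribution, which is precisely the invariant maintained by MCIMR's incremental rule in Equation~\ref{eq:condition}.
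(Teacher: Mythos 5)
Your argument is correct and is essentially the paper's own proof in different clothing: the chain-rule identity you use reduces the numerator to $I(O;E_{k+1}\mid \boldsymbol{E_k},C) - I(O;E_{k+1}\mid T,\boldsymbol{E_k},C)$, which is exactly the paper's entropy cancellation ($H(O\mid\boldsymbol{E_k})=H(O\mid E_{k+1},\boldsymbol{E_k})$ by the hypothesis, followed by ``conditioning reduces entropy'' for the leftover term). Your explicit handling of the context $C$ and of the sign of the denominator is slightly more careful than the paper, which sets $C=\emptyset$ for exposition and leaves the positivity of the denominator implicit, but the mathematical content is the same.
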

We measure conditional independence using the highly efficient independence
test proposed in~\cite{salimi2018bias}.




\vspace{1mm}
The MCIMR algorithm is depicted in Algorithm \ref{algo:mcimr}.
First, it initializes the attribute set $\boldsymbol{E}$ to be returned with the empty set (line~2). Then, new attributes are iteratively added according to the \textsc{NextBestAtt} procedure (line~4). The algorithm then applies the responsibility test to a selected attribute. If the responsibility of this attribute is ${\approx} 0$, the algorithm terminates and returns the solution obtained until this point (lines~5-7).
Otherwise, it terminates after $k$ iterations (line~9). Given the attribute set selected up until the $i$-th iteration, the \textsc{NextBestAtt} procedure finds the $i$-th attribute to be added. It implements Equation \ref{eq:condition}, by iterating over all candidate attributes and computing their individual explanation power (line~14), and their redundancy with selected attributes (lines~16-18). 
For simplicity, we omitted parts dedicated to handling missing data from presentation. In our implementation, before executing lines $14$ and $18$, we check if weights are needed to be added and adjust the computation accordingly. 

\begin{proposition}
\revised{The time complexity of the incremental MCIMR algorithm is $O(k |\mathcal{A}|)$}.
\end{proposition}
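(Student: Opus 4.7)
The plan is to bound the total work by counting the iterations of Algorithm~\ref{algo:mcimr} and then analyzing the cost of each call to \textsc{NextBestAtt}, showing that with a small amount of incremental bookkeeping each iteration costs $O(|\mathcal{A}|)$. The outer loop runs at most $k$ times, because either the stopping criterion (the responsibility test) fires first, or the loop terminates when $i=k$. So if each iteration can be executed in $O(|\mathcal{A}|)$ time, we obtain the claimed $O(k|\mathcal{A}|)$ bound.

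The obstacle with a naive reading of \textsc{NextBestAtt} is that it appears to cost $O(|\mathcal{A}|\cdot|\boldsymbol{E}|)$, since for every candidate $E\in\mathcal{A}\setminus\boldsymbol{E}$ the inner loop over $E'\in\boldsymbol{E}$ recomputes the redundancy sum from scratch. I would handle this with two standard pre-/re-processing steps. First, in a pre-processing stage I would compute and cache the values $I(O;T|C,E)$ for every $E\in\mathcal{A}$, which takes $O(|\mathcal{A}|)$ mutual-information evaluations and therefore $O(|\mathcal{A}|)$ time under the standard unit-cost assumption used elsewhere in the paper. Second, I would maintain, for each candidate attribute $E\in\mathcal{A}$, a running accumulator $s_E=\sum_{E'\in\boldsymbol{E}} I(E;E')$. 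Each time \textsc{MCIMR} selects an attribute $E_i$ and inserts it into $\boldsymbol{E}$, I would update $s_E \leftarrow s_E + I(E;E_i)$ for every $E\in\mathcal{A}$ in $O(|\mathcal{A}|)$ time.

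With this bookkeeping, a single invocation of \textsc{NextBestAtt} reduces to iterating over $\mathcal{A}\setminus\boldsymbol{E}$ and evaluating, for each candidate, the cached value $I(O;T|C,E)$ plus $s_E/|\boldsymbol{E}|$, taking the argmin. This is clearly $O(|\mathcal{A}|)$. The responsibility test on line~5 can be performed in constant time per iteration using the conditional independence test of~\cite{salimi2018bias} that the paper already relies on, so it does not affect the asymptotic bound. Summing over the at most $k$ iterations gives $k \cdot O(|\mathcal{A}|) = O(k|\mathcal{A}|)$, matching the $O(|\mathcal{A}|)$ pre-processing cost.

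The main conceptual obstacle is the unit-cost assumption: each mutual-information or conditional-mutual-information query is treated as an $O(1)$ primitive, with the cost of the underlying estimator absorbed into the model. I would make this assumption explicit at the start of the proof, consistent with the treatment in Sections~\ref{sec:model} and~\ref{subsec:mcimr}; the complexity analysis is then purely combinatorial and reduces to the two observations above.
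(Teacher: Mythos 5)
Your proof is correct and follows the same basic template as the paper's: bound the number of outer iterations by $k$ and the per-iteration cost by $O(|\mathcal{A}|)$ under a unit-cost model for mutual-information queries. The difference is that the paper's proof simply asserts that each iteration "examines all attributes in $\mathcal{A}$" and stops there, whereas you notice that \textsc{NextBestAtt} as written in Algorithm~\ref{algo:mcimr} actually performs $O(|\mathcal{A}|\cdot|\boldsymbol{E}|)$ pairwise redundancy evaluations per call, which sums to $O(k^{2}|\mathcal{A}|)$ over the run. Your accumulator trick (maintaining $s_E=\sum_{E'\in\boldsymbol{E}} I(E;E')$ and updating it in $O(|\mathcal{A}|)$ time after each selection) is exactly the bookkeeping needed to make the claimed $O(k|\mathcal{A}|)$ bound hold for the literal pseudocode, so your argument is in fact tighter than the paper's; the paper's bound is only accurate if one either treats $k$ as a constant or implicitly assumes this kind of memoization. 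Making the unit-cost assumption on CMI estimation explicit is also a genuine improvement, since that is where the real computational expense lives in practice.
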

\revised{The size of $\mathcal{A}$ is potentially very large. Thus, in the next section, we propose several optimizations to reduce it.}

\subsection{Pruning Optimizations}
\label{subsec:optimizations}
We propose several optimizations to reduce the size of $\mathcal{A}$ and thereby reduce execution times. These optimizations are used to prune attributes that are either uninteresting as an explanation or cannot be a part of the optimal solution, and significantly improve running times.
We propose two types of optimizations: \textbf{Across-queries optimizations} that could be executed at pre-processing; and \textbf{Query-specific optimizations} that could be done only once $O$ and $T$ are known and are executed before running the MCIMR algorithm.

\textit{Preprocessing pruning}. 
Attributes discarded at this phase either have a fixed value, a unique value for each tuple, or lots of missing values. 
Thus, such attributes are uninteresting as an explanation \cite{salimi2018bias, li2021putting}. 
\textbf{Simple Filtering}:
We drop all attributes with a constant value (e.g., the attribute \textsc{Type} which has the value \textsl{Country} to all countries), and attributes in which the percentage of missing values is ${>}90\%$. 
\textbf{High Entropy}: we discard attributes such as \textsc{wikiID}, that have high entropy and (almost) a unique value for each tuple (as was done in \cite{salimi2018bias}). 

\textit{Online pruning}. 
\textbf{Logical Dependencies}: Logical dependencies can lead to a misleading conclusion that we found a confounding attribute, where we are, in fact, conditioning on an attribute that is functionally dependent on $T$ or $O$ (see proof in \cite{full}). 
We thus discard
all attributes $E$ s.t. $H(T|E) {\approx} H(E|T) {\approx}  0$ (resp., for $O$). These tests correspond to approximate functional dependencies~\cite{salimi2018bias}, such as \textsc{CountryCode} {$\Rightarrow$} \textsc{Country}.
\textbf{Low Relevance}:
As mentioned, we assume that the optimal explanation does not contain attributes which are individually unimportant but become important in the context of others. We leverage this assumption to prune attributes in which their individual explanation power is low (tested using conditional entropy, see full details in \cite{full}).


Another possible optimization is to cluster attributes that are highly correlated, such as \textsc{HDI} and \textsc{HDI Rank}, to reduce the redundancy among attributes~\cite{li2021putting}. However, we found this optimization to be not useful because of: 
(1) It could only be done after the query arrives, namely after we are done filtering, and the clustering process took longer than running our algorithm on all attributes. (2) We found that attributes clustered together were not necessarily semantically related. 



\subsection{Identifying Unexplained Subgroups}
\label{subsec:query_ref}
The MCIMR algorithm finds the explanations for the correlation between $T$ and $O$. While the generated explanation is optimal considering the whole data, it may be insufficient for some parts in the data. We thus propose an algorithm the analyst may use after getting the explanation, to identify unexplained data subgroups. It receives the original query $Q$ and its generated explanation. The output is a set of data groups correspond to 
context refinements of $Q$, in which a different explanation is required and thus may be of interest to the analyst.

\begin{example}
\label{ex:query_ref}
Consider a query compare the average salary of developers among countries. 
The explanation found by \sysName\ is $\boldsymbol{E} {=}  \{$\textsc{HDI, Gini}$\}$. As mentioned, the HDI of all countries in Europe is similar.
Thus, for countries in Europe, it is likely that $\boldsymbol{E}$ is not a satisfactory explanation.
\end{example}
For simplicity, numerical attributes are assumed to be binned.  
Data groups are defined by a set of attribute-value assignments and correspond to refinements of the context $C$ of $Q$. Treating the context $C$ as a set of conditions, a refinement $C'$ of $C$ is a set s.t. $C' {\subset} C$.
We search for the largest data groups s.t. $\boldsymbol{E}$ can not serve as their explanation. Formally, given an explanation $\boldsymbol{E}$, $I(O;T|C,\boldsymbol{E})$ is referred to as the explanation score for $C$. We are inserted in the top-$k$ data groups (in terms of size), each correspond to a context refinement $C'$ of $C$, s.t. their explanation score is ${>} \tau$ for some threshold $\tau$ ($\tau$ can be set based on the initial explanation score).

\begin{example}
\label{ex:query_ref2}
Continuing with Example \ref{ex:query_ref}, we refine $Q$ by adding a WHERE clause selecting only countries in Europe ($C' =\{$\textsc{Continent = Europe}$\}$). 
Let $Q_{EU}$ denote this query. We get: $I(O;T|C',\boldsymbol{E}) {=} 2.13$. As mentioned in Example $\ref{ex:running_ex3}$, the optimal explanation for $Q_{EU}$ is $\{$\textsc{Gini, Density}$\}$. 
\end{example}

A naive algorithm would traverse over all
possible contexts refinements $C'$, check if the explanation score is ${>} \tau$, and will choose
the largest data groups for which $\boldsymbol{E}$ is not a satisfactory explanation. We propose an efficient algorithm, exploiting the notion of pattern graph traversal~\cite{AsudehJJ19}. Intuitively, the set of all  context refinements can be represented as a graph 
where nodes correspond to refinements and there is an edge between $C$ and $C'$ if $C'$ can be obtained from $C$ by adding a single value assignment.
This graph can be traversed in a top-down fashion, while generating each node at most once (see \cite{full}). 

Algorithm~\ref{algo:topk} depicts the search for the largest $k$ data groups that for which $\boldsymbol{E}$ is not a satisfactory explanation. It traverses the refinements graph in a top-down manner, starting for the children of $C$. It uses a max heap $MaxHeap$ to iterate over the refinements by their size. It first initialize the result set $\mathcal{R}$ (line~\ref{line:initR}) and $MaxHeap$ with the children of $C$ (line~\ref{line:initHeap}). Then, while the $\mathcal{R}$ consists of less then $k$ refinements (line~\ref{line:whileStart}), the algorithm extracts the largest (by data size) refinement $C'$ (line~\ref{line:getMax}) and computes $I(O;T|C',\boldsymbol{E})$. If it exceeds the threshold $\tau$ (line~\ref{line:if}), $C'$ is used to update $\mathcal{R}$ (line~\ref{line:updateR}). The procedure \texttt{update} checks whether any ancestor of $C'$ is already in $\mathcal{R}$ (this could happen because the way the algorithm traverses the graph). If not, $C'$ is added to $\mathcal{R}$. 
 If $I(O;T|C',\boldsymbol{E}) {\leq} \tau$ (line~\ref{line:if}), the children of $C'$ are added to the heap (lines~\ref{line:insertFor}--~\ref{line:insertToHeap}).

\begin{proposition}
\label{prop:algo_topk}
\revised{
Algorithm~\ref{algo:topk} yields the top-$k$ largest data groups in which their explanation score is grater than $\tau$. }
\end{proposition}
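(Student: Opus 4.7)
The proof rests on a size-monotonicity observation: if $C \subseteq C_1 \subseteq C_2$ as sets of selection conditions, then the data group satisfying $C_2$ is contained in the data group satisfying $C_1$, so every descendant in the refinement graph has data-group size no larger than any of its ancestors. Because the max-heap is keyed on data-group size, the sequence of extracted elements is non-increasing in size. I will use this ordering throughout.

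The core of the proof is a completeness claim which I would establish by induction on the length of a shortest path in the refinement graph from $C$ to a candidate $C'$: every refinement $C'$ with $I(O;T|C',\boldsymbol{E}) > \tau$ is, by the time the algorithm terminates or $\mathcal{R}$ fills up, either already in $\mathcal{R}$ or has some strict ancestor in $\mathcal{R}$. The base case handles the children of $C$, which are enqueued at initialization. For the inductive step, pick any parent $C_p$ of $C'$: by the inductive hypothesis, either some ancestor of $C_p$ (possibly $C_p$ itself) is already in $\mathcal{R}$, in which case $C'$ inherits that ancestor, or $C_p$ was extracted with explanation score at most $\tau$, whereupon its children, including $C'$, are pushed onto the heap and $C'$ is eventually extracted. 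The fact that the underlying pattern-graph traversal scheme generates each node at most once (as cited from prior work) prevents duplicate processing and keeps the induction well-founded.

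The second component is the order claim: by the heap discipline combined with size-monotonicity, the refinements added to $\mathcal{R}$ are considered in non-increasing order of data-group size, and the algorithm halts once $|\mathcal{R}|=k$. Combining this with completeness, the returned $\mathcal{R}$ consists of the $k$ largest qualifying refinements that are \emph{maximal} under the ancestor-descendant partial order.

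The step I expect to require the most care is reconciling the update procedure's ancestor check with the top-$k$-by-size formulation in the proposition. Because ancestors strictly dominate descendants in data size and are always extracted before them, whenever a descendant $C'$ is extracted while an ancestor of $C'$ already lies in $\mathcal{R}$, that ancestor has strictly larger data size and already exceeds the threshold; dropping $C'$ therefore cannot omit any member of the intended list of largest unexplained subgroups, provided we read ``largest'' as selecting maximal elements in the refinement order. This interpretation matches the motivating narrative of returning the coarsest ill-explained subgroups to the analyst and is precisely what the \texttt{update} routine enforces, closing the argument.
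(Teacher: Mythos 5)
The paper itself offers no proof of Proposition~\ref{prop:algo_topk}: the appendix proves the recoverability results, Theorem~\ref{thr:min_cmi_min_red}, the responsibility lemma, and the complexity bound, but for this proposition the authors only gesture at the pattern-graph traversal of~\cite{AsudehJJ19} and the full version. So your argument can only be judged on its own terms, and on those terms it is essentially sound. Its most valuable contribution is one the paper never makes explicit: the proposition as literally stated (the $k$ largest groups with score above $\tau$) is \emph{not} what Algorithm~\ref{algo:topk} computes, since \texttt{update} suppresses any qualifying refinement that has an ancestor already in $\mathcal{R}$; your reinterpretation of ``largest'' as ``top-$k$ by size among the refinements that are maximal under the ancestor order'' is the reading that makes the claim true and matches the algorithm. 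The size-monotonicity observation and the resulting non-increasing extraction order are correct and are the right backbone for the ordering half of the argument.

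Two steps need tightening. First, your induction is stated only for qualifying refinements, but the inductive step silently invokes a stronger invariant: for a non-qualifying parent $C_p$ the hypothesis as you phrased it says nothing, yet you conclude that $C_p$ ``was extracted with score at most $\tau$.'' You need the invariant that \emph{every} refinement is eventually extracted, or has an ancestor placed in $\mathcal{R}$, or survives only because the algorithm halted with $|\mathcal{R}|=k$ --- and in that last case you must separately argue (via the heap order, as you do for enqueued-but-unextracted nodes) that any qualifying maximal group left behind is no larger than everything already in $\mathcal{R}$. Second, ``pick any parent $C_p$'' is not safe: the refinement lattice is a DAG, and the cited traversal generates each node at most once precisely by assigning each node a unique generating parent, so $C'$ is enqueued only when \emph{that} parent is expanded. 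The induction should be routed through the designated parent; with an arbitrary parent the step can fail. Both are repairable without changing your strategy.
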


\revised{In the worst case, there are no such $k$ data groups and hence the algorithm traverses over every possible context refinement of $Q$, which is polynomial in the number of attributes and (binned) values. However, as we show, in practice this algorithm efficiently identifies the data groups of interest, while exploring only an handful of context refinements. }

\setlength{\textfloatsep}{2px}
\SetInd{1.3ex}{1.3ex}
\begin{algorithm}[t]
\scriptsize
	\DontPrintSemicolon
	\SetKwInOut{Input}{input}\SetKwInOut{Output}{output}
	\LinesNumbered
	\Input{A number $k$, a set of attributes $\mathcal{A}$, the attributes $O$ and $T$, the context $C$, an explanation $\boldsymbol{E}$, and a threshold $\tau$.}
	\Output{Context refinements $\{C_1,{\ldots},C_k\}$ s.t. the corresponding groups are the largest $k$ groups and $I(O;T|C_i,\boldsymbol{E}) {>} \tau$} \BlankLine
    \SetKwFunction{GenChildren}{\textsc{GenChildren}}
    \SetKwFunction{update}{\textsc{update}}
    $\mathcal{R} \gets \emptyset$ \label{line:initR}\\
    $MaxHeap \gets \GenChildren(C)$ \label{line:initHeap}\\
    \While{$|\mathcal{R}| < k$ or $MaxHeap.isEmpty()$}
    {\label{line:whileStart}
        $C' \gets MaxHeap.exatractMax()$\label{line:getMax}\\
        \eIf{$I(O;T|C',\boldsymbol{E}) > \tau$}%
        {\label{line:if}
            $\update(\mathcal{R}, C')$\label{line:updateR} \tcp*{\textcolor{blue}{If none of the ancestors of $C'$ are in $\mathcal{R}$, insert $C'$ into $\mathcal{R}$.}}
        }
        {
            \For{$C'' \in \GenChildren(C')$ }
             {\label{line:insertFor}
                $MaxHeap.insert(C'')$\label{line:insertToHeap}
             }
        }
        
    }
    \Return $\mathcal{R}$\label{line:returnTopk}
	\caption{Top-$k$ unexplained data groups.}\label{algo:topk}
\end{algorithm}

\section{Experimental study}
\label{sec:experiments}
We present experiments that evaluate the effectiveness and
efficiency of our solution. We aim to address the following research questions. $Q1$: What is the
quality of our explanations, and how does it compare to that of existing methods? $Q2$: How robust are the explanations to missing data? $Q3$ What is the efficiency of the proposed algorithm and the optimization techniques? $Q4$: How useful are our proposed extensions? 


\vspace{1mm}
Our code and datasets are available at \cite{full}. We used DBPedia KG \cite{dbpedia} for attribute extraction, 
and the Pyitlib library~\cite{pyitlib}
for information-theoretic computations.  
The experiments were executed on a PC with a
$4.8$GHz CPU, and $16$GB memory.

\begin{table}
	\centering
	\scriptsize
		\caption{Examined Datasets.}
			\label{tab:data}
			\vspace{-8px}
	\begin{tabular} {|p{17mm}|p{8mm}|p{5mm}|p{42mm}|}
		\hline
	\textbf{Dataset} & \textbf{n}& \textbf{|$\mathcal{E}$|}&\textbf{Columns used for extraction} 
	 \\
		\hline
SO~\cite{stackoverflow}&47623&461&Country, Continent\\
				\hline
				COVID-19~\cite{covid19} &188&463&Country, WHO-Region\\
				\hline
				Flights~\cite{flights}&5819079&704&Airline, Origin/Destination city/state\\
				\hline
				Forbes~\cite{forbes} & 1647&708&Name\\
				\hline
	\end{tabular}
\end{table}

\paragraph*{Datasets}
We examine four commonly used datasets: 
\textbf{(1) SO}: Stack Overfow's annual developer survey is a survey of people who code around the world.
It has more than $47$K records containing information
about the developers' such as their age, income, and country. 
\textbf{(2) Covid-19}: This dataset includes information such as 
number of confirmed, death, and new cases in 2020 across the globe. 
\textbf{(3) Flights Delay}: This dataset contains transportation statistics of over $5.8$M domestic flights operated by large air carriers in the USA. 
\textbf{(4) Forbes}: This dataset contains annual earning information of $1.6$K celebrities between $2005-2015$
It contains the celebrities' annual pay, and category (e.g., Actors, Producers).

The attributes used for property extraction and the number of extracted attributes in each dataset are given in Table~\ref{tab:data}. 

\paragraph*{Baseline Algorithms}
We compare \sysName\ against the following baselines:
\textbf{(1) Brute-Force}: The optimal solution according to Def. \ref{problem:problem1}. This algorithm implements an exhaustive search over all subsets of attributes. To make it feasible, we run it after employing our pruning optimizations.  
\textbf{(2) Top-K}: This naive algorithm ranks the attributes according to their individual explanation power (equivalent to Max-Relevance only). 
\textbf{(3) Linear Regression} (LR): This baseline employs the OLS method to estimate the coefficients of a linear regression describing the
relationship between the outcome and the candidate attributes. The explanations are defined as the top-$k$ attributes with the highest coefficients (s.t. the $p$ value is ${<}.05$). Note that Pearson's $r$ is the standardized slope of LR and thus can be viewed as part of our competing baselines.
 \textbf{(4) HypDB} \cite{salimi2018bias}: This system employs an algorithm for confounding variable detection based on causal analysis. 
The explanations are defined as the top-k attributes with the highest responsibility scores. 
\textbf{(5) \sysName$^-$}: Last, to examine how pruning affects the explanation, we examine the explanation generated by \sysName\ without the pruning optimizations.

We also examined the explanations generated by CajaDE \cite{li2021putting}, a system that generates query results explanations based on augmented provenance information.  
However, since in all cases, CajaDE generated explanations obtained the lowest scores, we omit its results from presentation. The reason for that CajaDE explanations are a set of patterns that are unevenly distributed among groups in the query results, which are independent of the outcome variable. Thus, it cannot generate explanations that explain the correlation between $T$ and $O$.



Unless mentioned otherwise, we set the maximal explanation size, $k$, to $5$ and extracted attributes for 1-hop in the KG. For a fair comparison, we run all baselines (except for \sysName$^-$) after employing our pruning optimizations.

\begin{table*}
	\centering
	\scriptsize
		\caption{User study: The \textcolor{red}{best} and \textcolor{blue}{second best} explanations are marked in red and blue, resp.}
			\label{tab:case_study}
			\vspace{-6px}
	\begin{tabular} {|p{9mm}|p{2mm}|p{23mm}||p{16mm}|p{22mm}|p{18mm}||p{18mm}|p{18mm}|p{18mm}|}
		\hline
\multicolumn{2}{|c|}{\textbf{Dataset}} & \textbf{Query}&\textbf{Brute-Force}& \textbf{\sysName-}&\textbf{\sysName}&\textbf{Top-K}&\textbf{LR}&\textbf{HypDB}\\
		\hline
\multirow{6}{*}{SO}&$Q_1$ & Average salary per country &-&\textcolor{red}{\textbf{ HDI Rank, Gini}}&\textcolor{blue}{\textbf{HDI, Gini}} &HDI, Established Date& Population Census, Language &GDP\\\cline{2-9}

&$Q_2$ & Average salary per continent &-&\textcolor{blue}{\textbf{GDP Rank, Density}}&\textcolor{red}{\textbf{GDP,Density}} &GDP,Area rank&GDP, Area Rank&GDP\\\cline{2-9}

&$Q_3$ & Average salary per country in Europe&-&\textcolor{blue}{\textbf{Population Census, Gini Rank}} &\textcolor{red}{\textbf{Population Census, Gini}}&Population Census, Population Estimate&Population Census, Language &Gini, Area Rank\\

\hline\hline

\multirow{15}{*}{Flights}&$Q_1$ & Average delay per origin city & -&\textcolor{red}{\textbf{Precipitation Days, Year UV, Airline}} &Population urban, Year Low F, Airline&Year Low F, Year Avg F, December Low F&Year Low F, December percent sun, Day&\textcolor{blue}{\textbf{Year Low F, May Precipitation Inch, Airline}}\\\cline{2-9}

&$Q_2$ & Average delay per origin state &-& \textcolor{blue}{\textbf{Density, Year Snow, Airline}}&
\textcolor{red}{\textbf{Population estimation, Year Low F, Airline}}
&Population estimation, Population Urban, Population Rank&
Population estimation, Median Household Income, Distance
&Record Low F, Population estimation, Day\\\cline{2-9}

&$Q_3$ &Average delay per origin cities in CA &-& \textcolor{blue}{\textbf{ Density, Population Metropolitan, Security Delay}} & Density, Population Total,Security Delay&Population Metropolitan,Security Delay &-&\textcolor{red}{\textbf{Density, Population Ranking, Cancelled}}\\\cline{2-9}

&$Q_4$ &Average delay per origin state and airline&- &\textcolor{blue}{\textbf{Population Total, Fleet size}}&\textcolor{red}{\textbf{Population Ranking, Fleet size}} &Density, Population Total &- &Revenue, Dec Record Low F\\\cline{2-9}

&$Q_5$ &Average delay per airline &-& \textcolor{red}{\textbf{Equity, Fleet Size}}&\textcolor{red}{\textbf{Equity, Fleet Size}} &Equity, Net Income&\textcolor{red}{\textbf{Equity, Fleet Size}}&\textcolor{blue}{\textbf{Num of Employees, Revenue}}\\
\hline\hline

\multirow{9}{*}{Covid-19}&$Q_1$ & Deaths per country&\textcolor{red}{\textbf{HDI, GDP, Confirmed cases}} &\textcolor{blue}{\textbf{HDI, GDP Rank, Confirmed cases}}&\textcolor{red}{\textbf{HDI, GDP, Confirmed cases}}&GDP Rank, GDP Nominal, HDI&Area Rank, Currency, Recovered cases &Density, Time Zone, Confirmed cases\\\cline{2-9}

&$Q_2$ &Deaths per country in Europe & \textcolor{blue}{\textbf{Gini, Population Census, Confirmed cases}}&\textcolor{red}{\textbf{ Gini Rank, Density, Confirmed cases}} &\textcolor{blue}{\textbf{Gini, Population Census, Confirmed cases}}&Gini Rank, Gini, GDP&Area Rank, Currency, Population Total&Currency, GDP, New cases\\\cline{2-9}

&$Q_3$ & Average deaths per WHO-Region &\textcolor{red}{\textbf{Density, Confirmed Cases}}& \textcolor{red}{\textbf{Density,Confirmed Cases}}&\textcolor{red}{\textbf{Density,Confirmed Cases}}&\textcolor{red}{\textbf{Density,Confirmed Cases}}&-&\textcolor{blue}{\textbf{Area Km,Confirmed Cases}}\\
\hline
\hline

\multirow{6}{*}{Forbes}&$Q_1$ & Salary of Actors&\textcolor{red}{\textbf{Net Worth, Gender, Age}} &\textcolor{blue}{\textbf{Net Worth, ActiveSince, Gender}} &Net Worth, Gender&Net worth, Awards&Citizenship, Honors&Gender, Honors\\\cline{2-9}

&$Q_2$ & Salary of Directors/Producers &\textcolor{blue}{\textbf{Net Worth, Awards}}&\textcolor{red}{\textbf{Years Active, Net Worth}} & \textcolor{blue}{\textbf{Net Worth, Awards}}&Net Worth, Age&-&Years Active\\\cline{2-9}

&$Q_3$ & Salary of Athletes &\textcolor{red}{\textbf{Cups, Draft Pick, Active Years}}&National Cups, Draft Pick  &Cups, Draft Pick&Total Cups, National Cups&-&\textcolor{blue}{\textbf{Cups, Active Years}}\\
\hline

 	\end{tabular}
 	\vspace{-4mm}
\end{table*}

\begin{table}
	\centering
	\scriptsize
		\caption{Avg. explanation scores according to the subjects.} 
			\label{tab:scores_users}
			\vspace{-6px}
	\begin{tabular} {|p{15mm}|p{15mm}|p{20mm}|}
		\hline
	\textbf{Baseline} & \textbf{Average Score}& \textbf{Average Variance}
	 \\
		\hline
		Brute-Force&3.8&0.8\\
		\sysName-&3.7&1.1\\
		\sysName&3.5&0.9\\
		HypDB&2.8&1.1\\
		Top-K&2.1&0.8\\
		LR&1.8&0.6\\
		\hline

	\end{tabular}
\end{table}

\subsection{Quality Evaluation (Q1)}
\label{subsec:quality}
We validate our intuition that attributes extracted from KGs can explain correlations in common scenarios. To this end, we randomly generated $40$ SQL queries (10 from each dataset) as follows. We set $T$ to be one of the attributes used for attribute extraction (as listed in Table \ref{tab:data}).
We set $O$ to be a numerical attribute that could be predicted from the data (e.g., \textsc{Departure/Arrival Delay} in Flights, \textsc{New/Death Cases} in Covid-19). We then added a WHERE clause by randomly picking another attribute and one of its values, ensuring selected subsets contain more than 10\% of the tuples in the original dataset. Full details are given in the Appendix. We say our approach was useful if (1) the partial correlation between the exposure and outcome (while conditioning on an explanation generated by \sysName) is lower than the original correlation, and (2) the explanation contains at least one extracted attribute. We report this was the case in $72.5\%$ percent of the queries. 

Next, we aim to asses the quality of the generated explanations to validate our problem definition. To this end, 
we present a user study consisting of explanations produced by each algorithm. Since a standard benchmark for results explanation does not exist, we consider $14$ representative queries suffering from confounding bias, as shown in Table \ref{tab:case_study}.
Our queries are inspired by real-life sources, such SO annual reports~\cite{stackoverflowreport}, news and media websites (e.g., Vanity Fair~\cite{vanityfair}, USA Today\cite{usatoday} for Forbes and Flights), and academic papers \cite{upadhyay2021correlation,kaklauskas2022effects}. 
Similar experiments were conducted in \cite{salimi2018bias,li2021putting, lin2021detecting}. 
To compare the generated explanations with the "ground-truth" explanations, we will show that our explanations are supported by previous findings. A similar approach was taken in \cite{salimi2018bias}. 

We
recruited $150$ subjects on Amazon MTurk. This sample size enables us to observe a 95\% confidence level with a 10\% margin of error.  
Subjects were asked to rank each explanation of each method (shown together with its corresponding query) on a scale of $1{-}5$, where $1$ indicates that it does not make sense and $5$ indicates that the explanation is highly convincing. The form we gave to the subjects is available at \cite{full}. 

HypDB's time complexity is exponential in the size of
$\mathcal{A}$~\cite{salimi2018bias}.
We run it over all attributes in $\mathcal{A}$ (after pruning) and report that it never terminates within $10$ hours.
Thus, 
we have no choice but to limit the number of attributes for HypDB, to allow it to generate explanations in a reasonable time. For HypDB, besides pruning, we omitted candidate attributes uniformly at random, ensuring that $|\mathcal{A}| {\leq} 50$. 
We only report the results of Brute-Force for the small Covid-19 and Forbes datasets, as it was infeasible to compute them for the larger datasets. We do not randomly drop attributes for computational efficiency here because Brute-Force is intended to be an optimal solution for our problem definition against which our algorithm is judged. 
The explanations generated by different methods are given in Table~\ref{tab:case_study}, and the average explanation scores given by the subjects are depicted in Table~\ref{tab:scores_users}.

We summarize our main finding as follows:\\
$\bullet$ The subjects found the explanations generated by Brute-Force, \sysName$^-$, and \sysName\ to be the most convincing. This supports our mathematical definition (Def 2.1) of what constitutes a good explanation.\\
$\bullet$ \sysName\ explanations are supported by previous in-domain findings, which serve as "ground-truth" explanations. \\
$\bullet$ Our pruning has little effect on explanation quality. \\
$\bullet$ The next best competitor is HypDB. However, it is unable to scale to a large number of candidate attributes.\\
$\bullet$ As expected, Top-k yields redundancy in selected attributes.

First, subjects found the explanations generated by Brute-Force, \sysName$^-$, and \sysName\ to be the most convincing.  
The pairwise differences between the average scores of these 3 methods are not statistically significant. Previous in-domain findings also support these explanations. For example, in SO $Q_1$, it was shown in~\cite{stackoverflowreport} that there is a correlation between developers salary and countries' economies (reflected in the HDI and Gini values). For Flights $Q_1$, it was stated in~\cite{usatoday} that weather is one of the top reasons for flights delay in the US. For Covid-19 $Q_1$, it was shown that there is a correlation between countries' economies and Covid-19 death rate \cite{upadhyay2021correlation,kaklauskas2022effects}. More details can be found in the Appendix.
In all cases where the results of Brute-Force and \sysName\ are different, it happens because \sysName\ drops attributes with insignificant responsibility (according to the responsibility test). For example, in Forbes $Q_1$, \sysName\ dropped \textsc{Age}. 
The low difference between the results of \sysName$^-$ and \sysName\ indicates that pruning has little effect on explanations quality. Namely, \emph{\sysName\ is able to execute efficiently without compromising on explanation quality.}


The explanations of all methods consist of attributes extracted from the KG. This validates our assumptions that KGs can serve as valuable sources for results explanations. 
The next best competitor is HypDB (the average score is worse than that of \sysName. This difference is statistically significant, $p {<} .05$). This is not surprising as HypDB finds confounding attributes using causal analysis. However, its main disadvantage is its ability to scale for large number of attributes. In cases where HypDB generated explanations that were considered not convincing, it was mainly because important attributes were dropped (as we limited the number attributes to enable feasible execution times). 
Not surprisingly, the explanations generated by Top-K and LR were considered to be less convincing (their average scores are statistically significant from all other methods, $p{<}.05$). 
For Top-K, this is substantially because it ignores redundancy among attributes. For example, in Flights $Q_1$, it chose the attributes \textsc{Year Low F} and \textsc{Year Average F}, which are highly correlated. For LR, in many cases, it failed to generate explanations, as there were no attributes with low enough p-values. Even when it succeeded, the subjects found them to be not convincing. The reason is that LR focuses on finding linear correlations.

\begin{figure}[t]
\begin{center}
		\includegraphics[scale = 0.4]{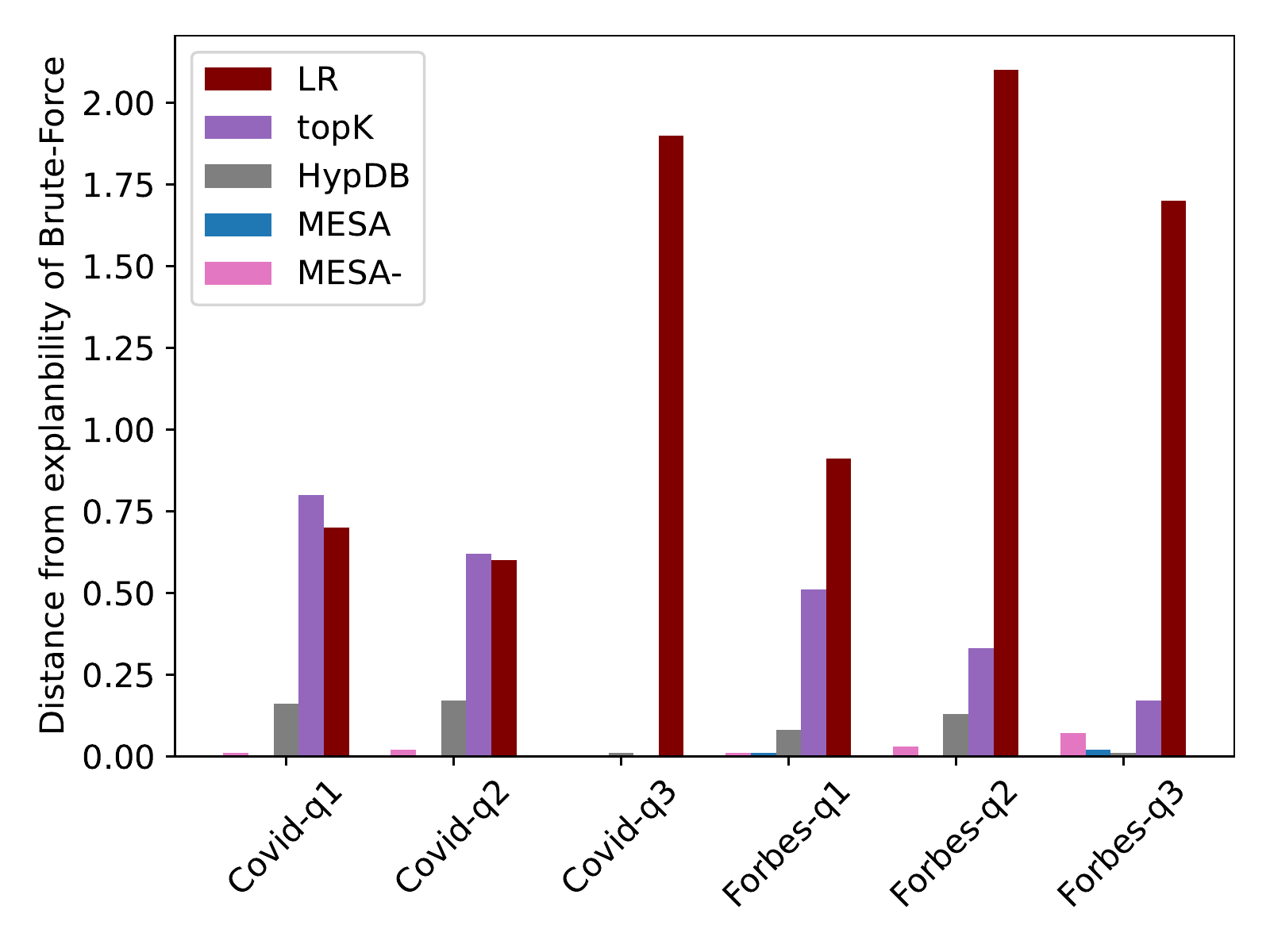}
		\vspace{-8px}
	
		\caption{Distance from explainability scores of Brute-Force.} 
			\label{fig:explainability_scores}
\end{center}
\end{figure}

\vspace{1mm}
\textit{Explainability scores}.  
Let $\boldsymbol{E}$ denote the explanation found by an algorithm. We call $I(O;T|\boldsymbol{E})$ the explainability score. Explainability score equal to $0$ means that $\boldsymbol{E}$ perfectly explains the correlation between $O$ and $T$. 
\revised{The explainability scores of Brute-Force serve as the gold standard (as by definition, it aims to minimize this score)}. 
In some cases, the explanations generated by all algorithms, including Brute-Force, cannot fully explain the correlations. E.g., in Flights $Q_2$, the explainability score of Brute-Force is $0.25$. This means that other factors that affect flight delays may not exist in the KG (e.g., labor problems).
The results are depicted in Figure \ref{fig:explainability_scores}.
The y-axis is the distance between the explainability scores of each method and Brute-Force. The lower the distance the better is the explanation.   
Observe that the explainability scores of \sysName\ are almost as good as the ones of Brute-Force and \sysName$^-$, and are much better than those of the competitors.

Additional experiments can be found in the Appendix.

\subsection{Robustness to Missing Data (\textbf{Q2})} 
On average, the percentage of missing values in extracted attributes is 37\%, 42\%, 45\% and 73\% in Covid-19, SO, Flights and Forbes, resp. The high prevalence of missing values in Forbes is because DBpedia uses different attributes to describe a person from each category (e.g., actors, authors). 
In Covid-19, SO, Flights, and Forbes, the percentage of attributes with selection bias is 13.3\%, 14.1\%, 24.2\%, and 29.4\%, resp. 
\emph{This verifies that selection bias exists in attributes extracted from KGs, and thus should be appropriately handled.}

We examine the robustness of our explanations to missing data, by varying the percentage of missing values from the top $10$ most relevant (w.r.t. the outcome) attributes. We examine two ways to omit values: missing-at-random and biased removal, where the top-$x$ highest values from examined attributes were omitted (when varying $x$). We examine the effect on our generated explanations average explainability score.
Explainability should not be affected if an explanation is robust to missing data. We also examine the effect on the explainability scores while imputing missing values (using the common mean imputation technique~\cite{zhang2016missing}). The results for the SO and Covid datasets are depicted in Figure~\ref{fig:missing_data}. As expected, data imputation has huge negative effect on explainability. Our approach is much less sensitive to missing data: Even with 50\% missing values (at random or not), the explainability scores have hardly changed. When the percentage of missing values is above 50\%, a lot of the information is lost, and thus it is harder to estimate partial correlation correctly. 

\begin{figure}[t]

	\begin{center}
		\begin{minipage}[b]{0.25\textwidth}
		\includegraphics[scale = 0.25]{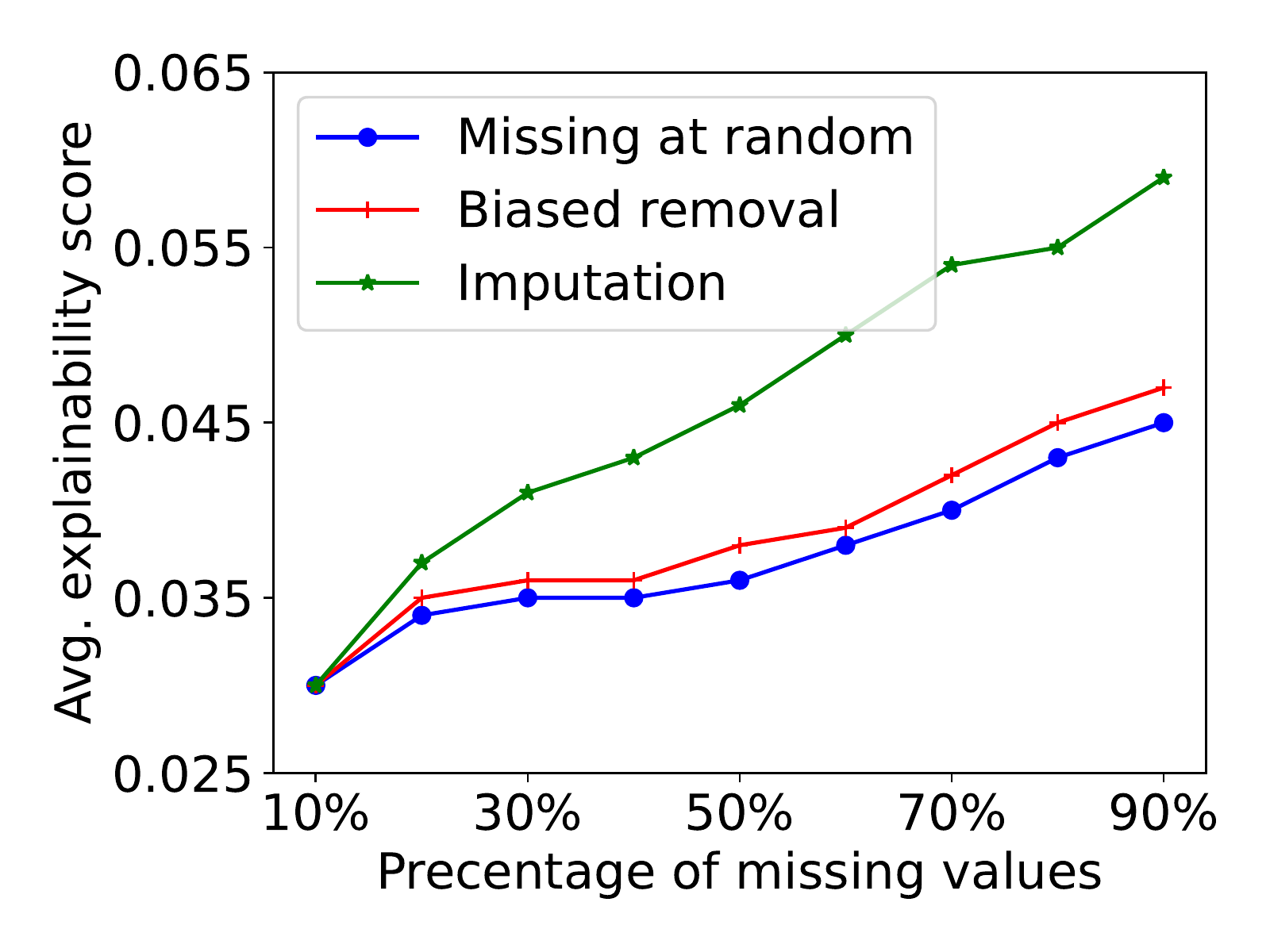}
		\vspace{-10px}
		\centering
		\caption*{{SO}}  
	\end{minipage}%
		\begin{minipage}[b]{0.25\textwidth}
		\includegraphics[scale = 0.25]{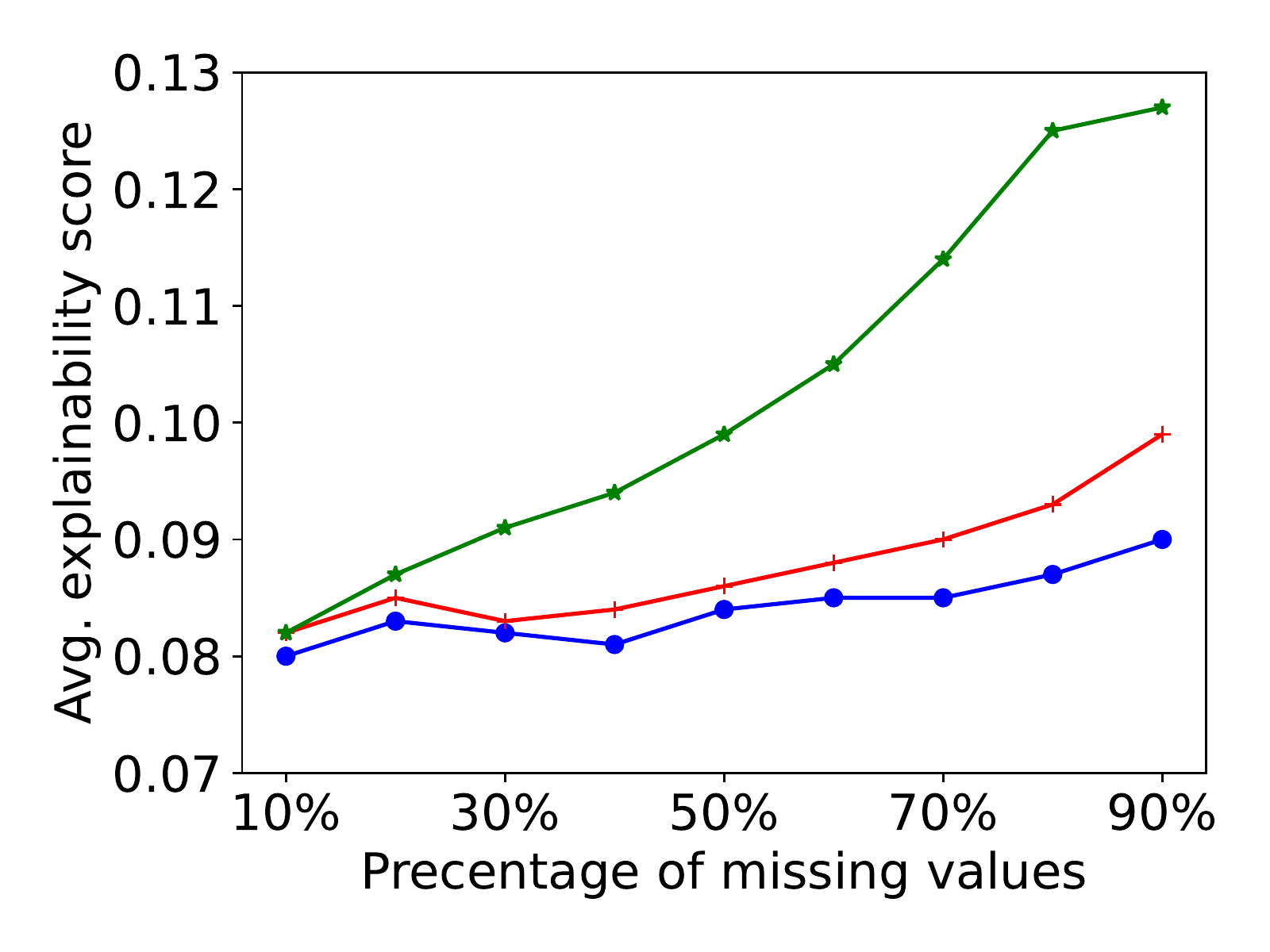}
		\vspace{-10px}
		\centering
		\caption*{{Covid-19}}  
	\end{minipage}%
		\end{center}
	\vspace{-10px}
	\caption{Explainability as a function of missing data.} \label{fig:missing_data}
	\vspace{-1mm}
\end{figure}


\begin{figure*}[htpb]
\centering
	\begin{center}
		\begin{minipage}[b]{0.333\textwidth}
		\includegraphics[scale = 0.3]{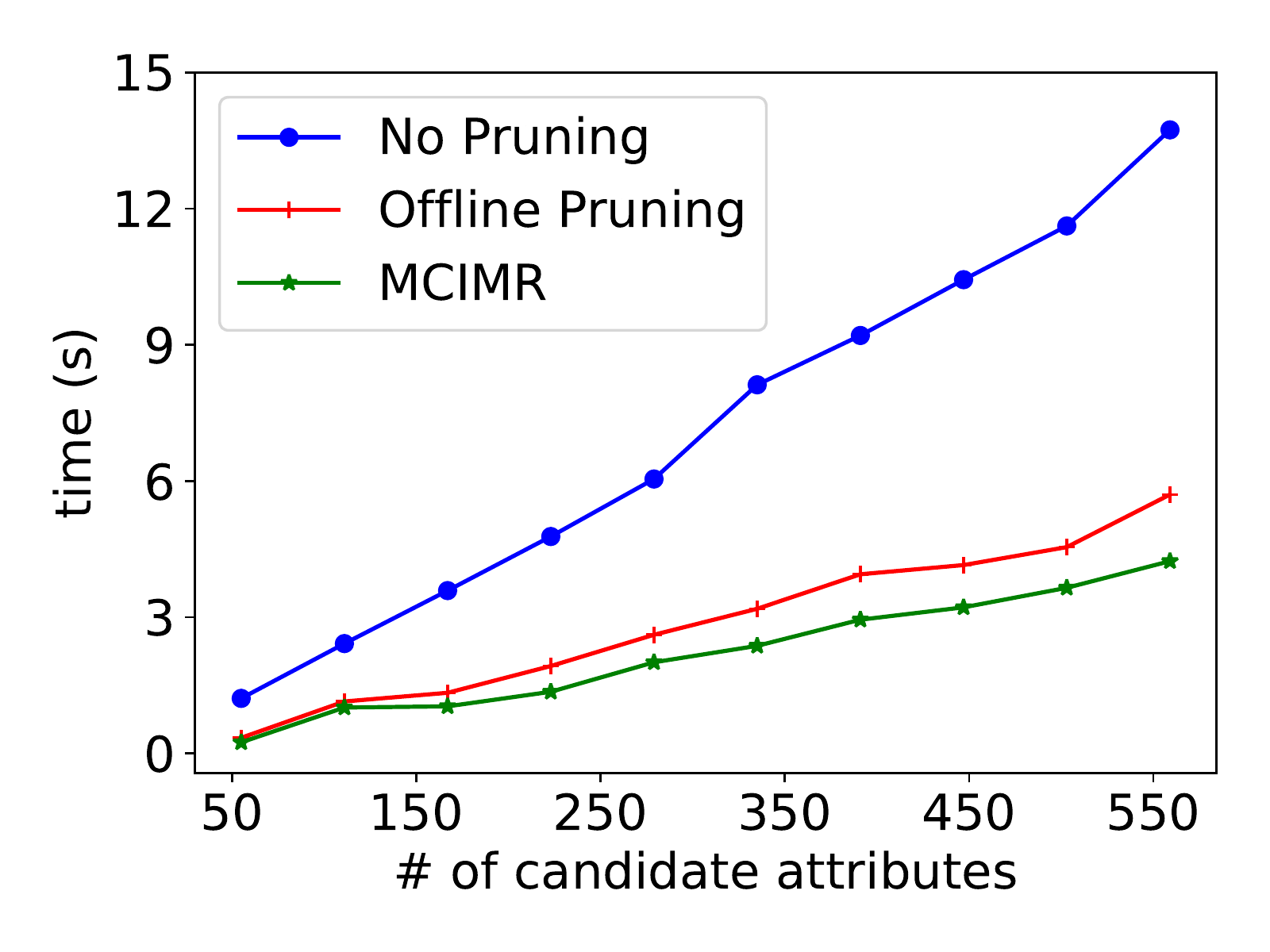}
		\vspace{-8px}
		\centering
		\caption*{{SO}}  
	\end{minipage}%
		\begin{minipage}[b]{0.333\textwidth}
		\includegraphics[scale = 0.3]{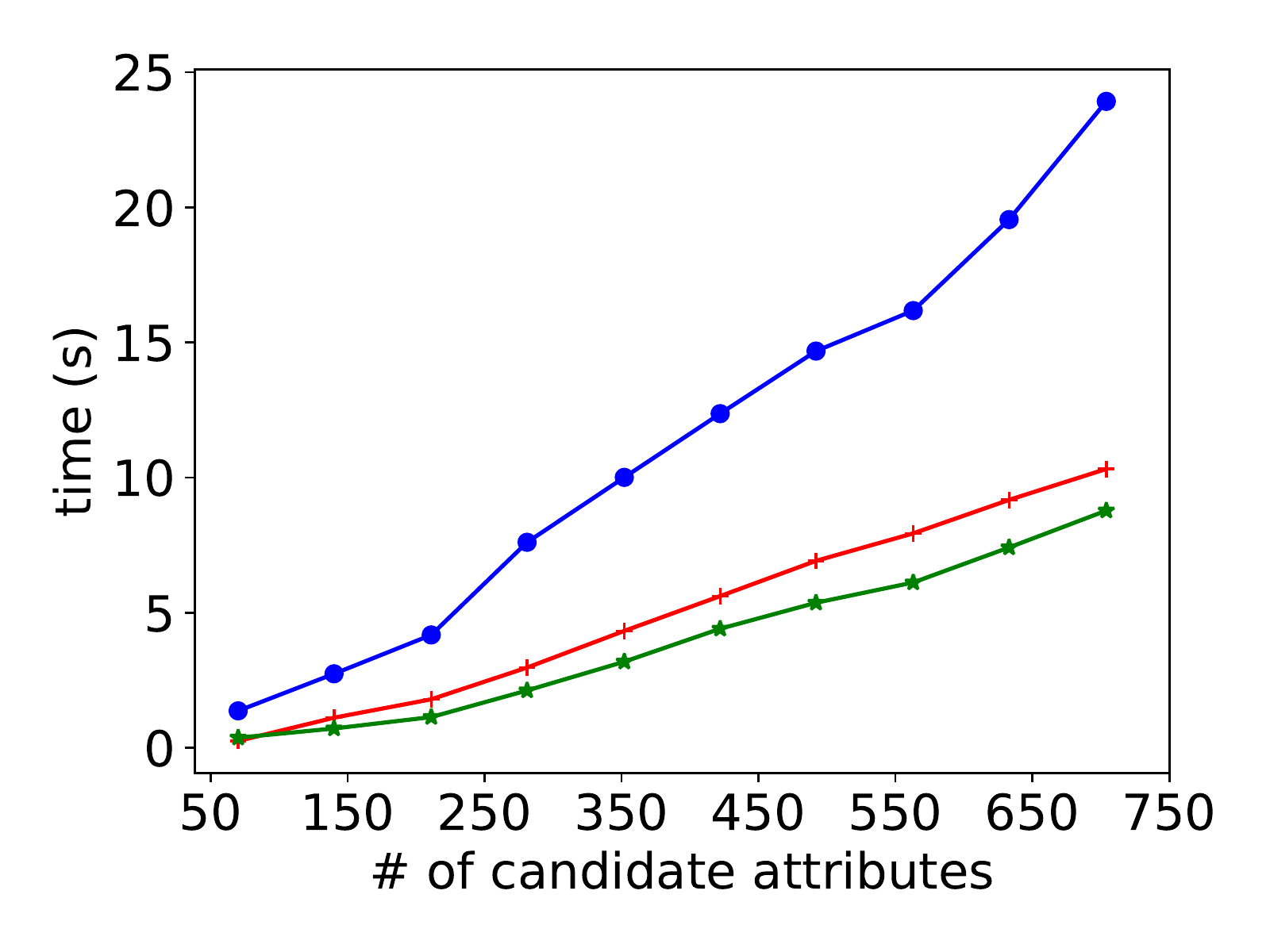}
		\vspace{-8px}
			\centering
		\caption*{{Flights}}  
	\end{minipage}%
		\begin{minipage}[b]{0.333\textwidth}
		\includegraphics[scale = 0.3]{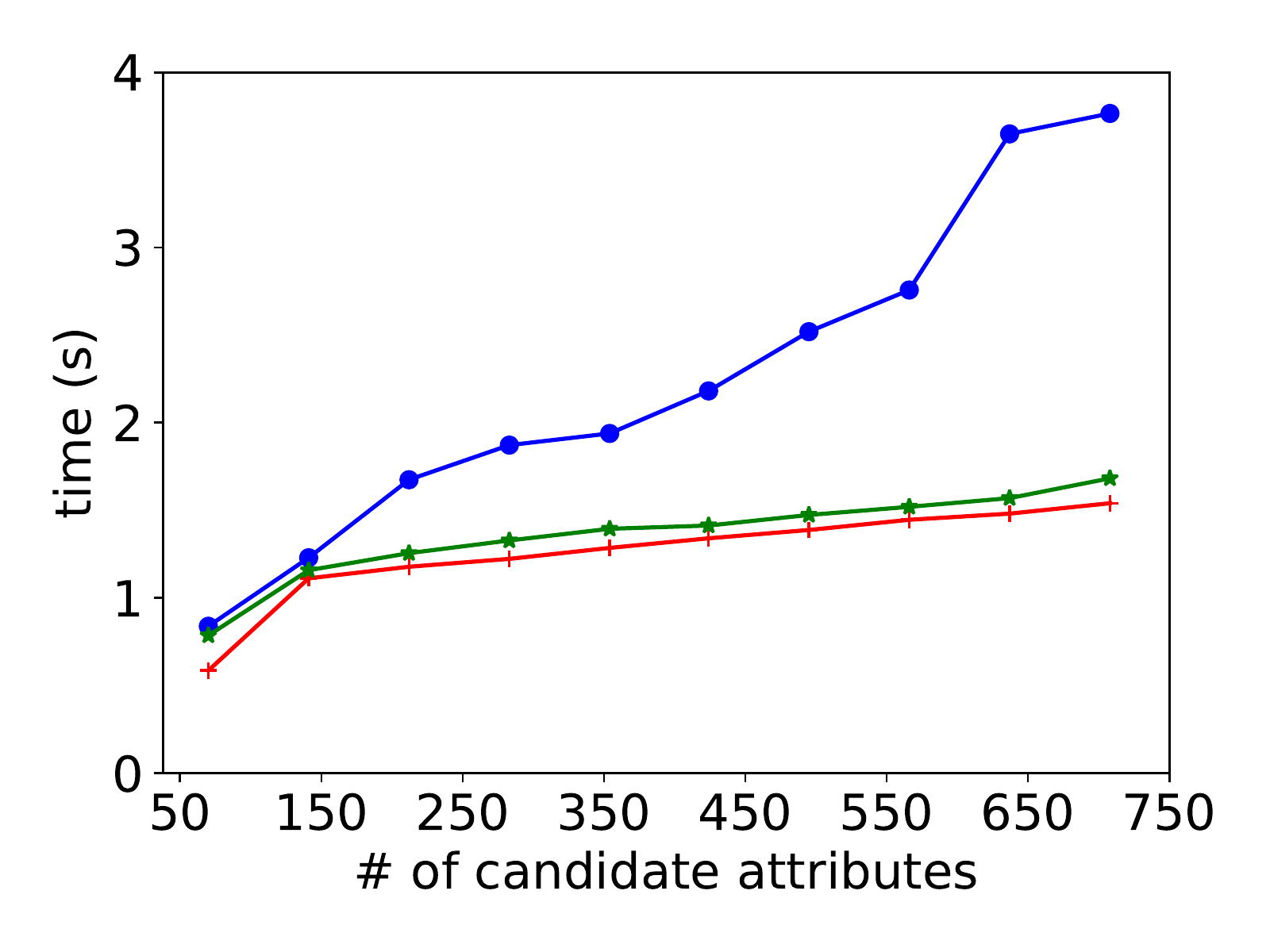}
		\vspace{-8px}
			\centering
		\caption*{{Forbes}}  
	\end{minipage}%
		\end{center}
	\vspace{-10px}
	\caption{Running times as a function of the number of candidate attributes.} \label{fig:times_cols}
	\vspace{-4mm}
\end{figure*}
\begin{figure*}[htpb]

	\begin{center}
		\begin{minipage}[b]{0.333\textwidth}
		\includegraphics[scale = 0.3]{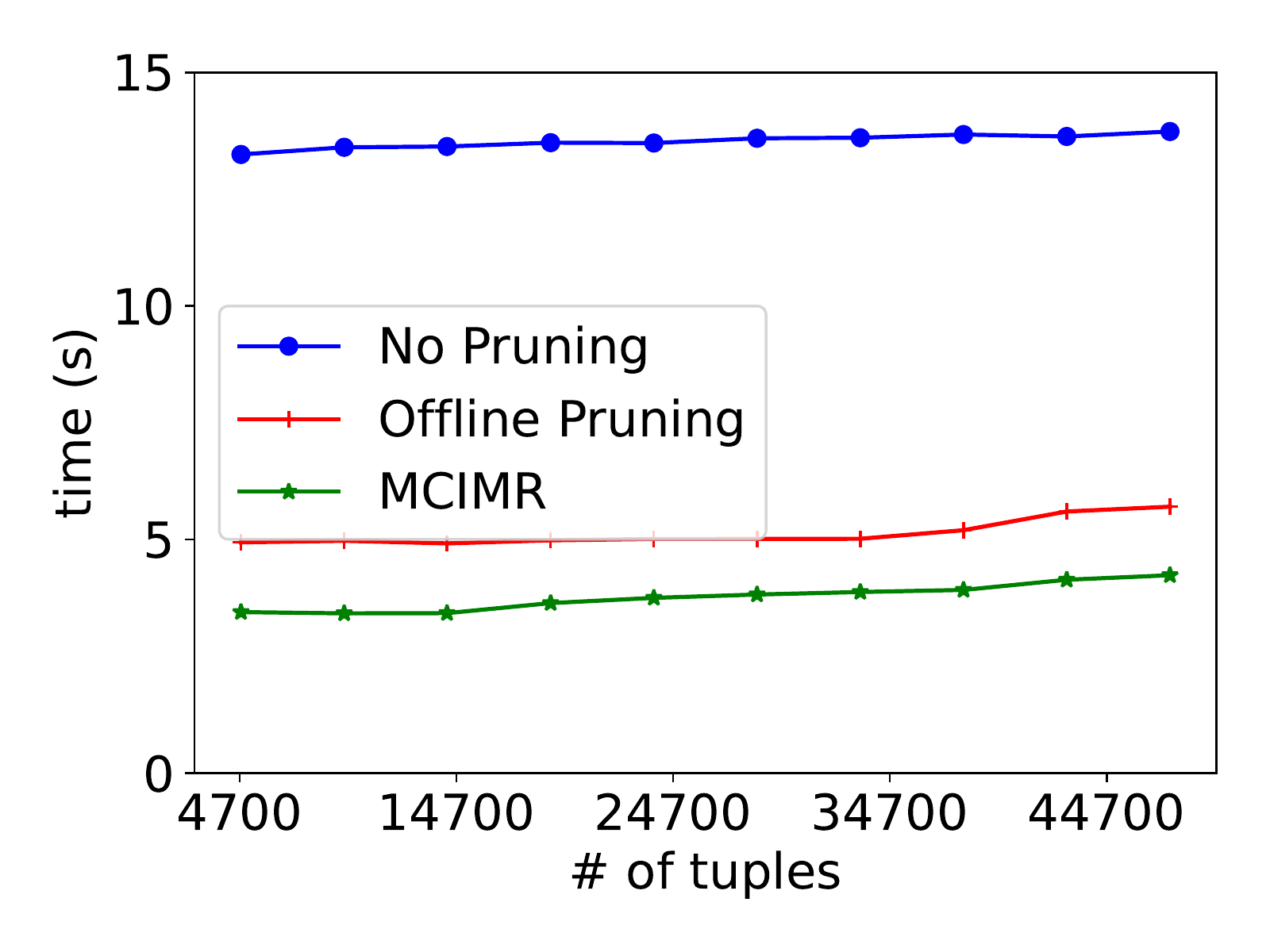}
		\vspace{-8px}
			\centering
		\caption*{{SO}}  
	\end{minipage}%
		\begin{minipage}[b]{0.333\textwidth}
		\includegraphics[scale = 0.3]{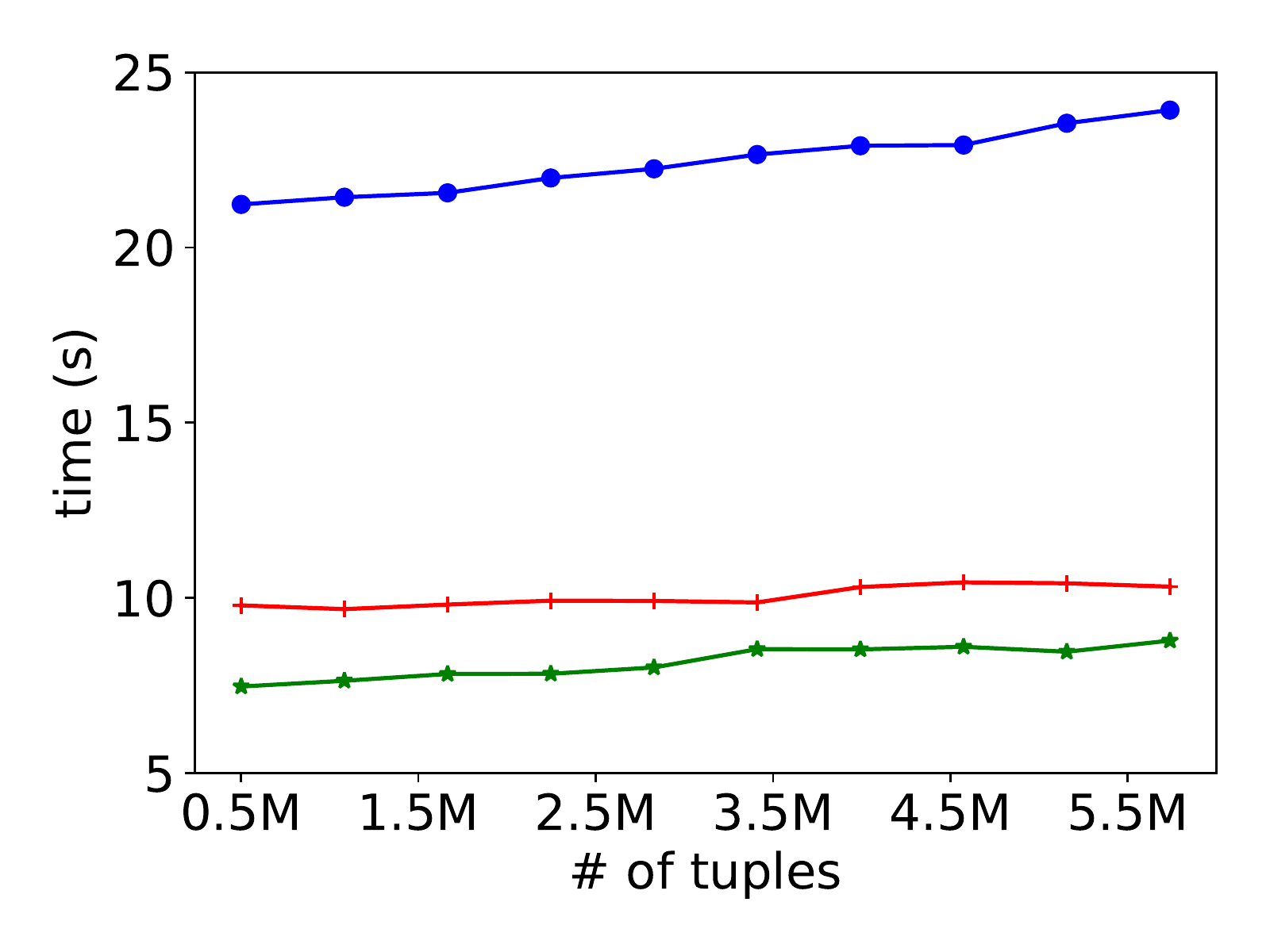}
		\vspace{-8px}
			\centering
		\caption*{{Flights}}  
	\end{minipage}%
		\begin{minipage}[b]{0.333\textwidth}
		\includegraphics[scale = 0.3]{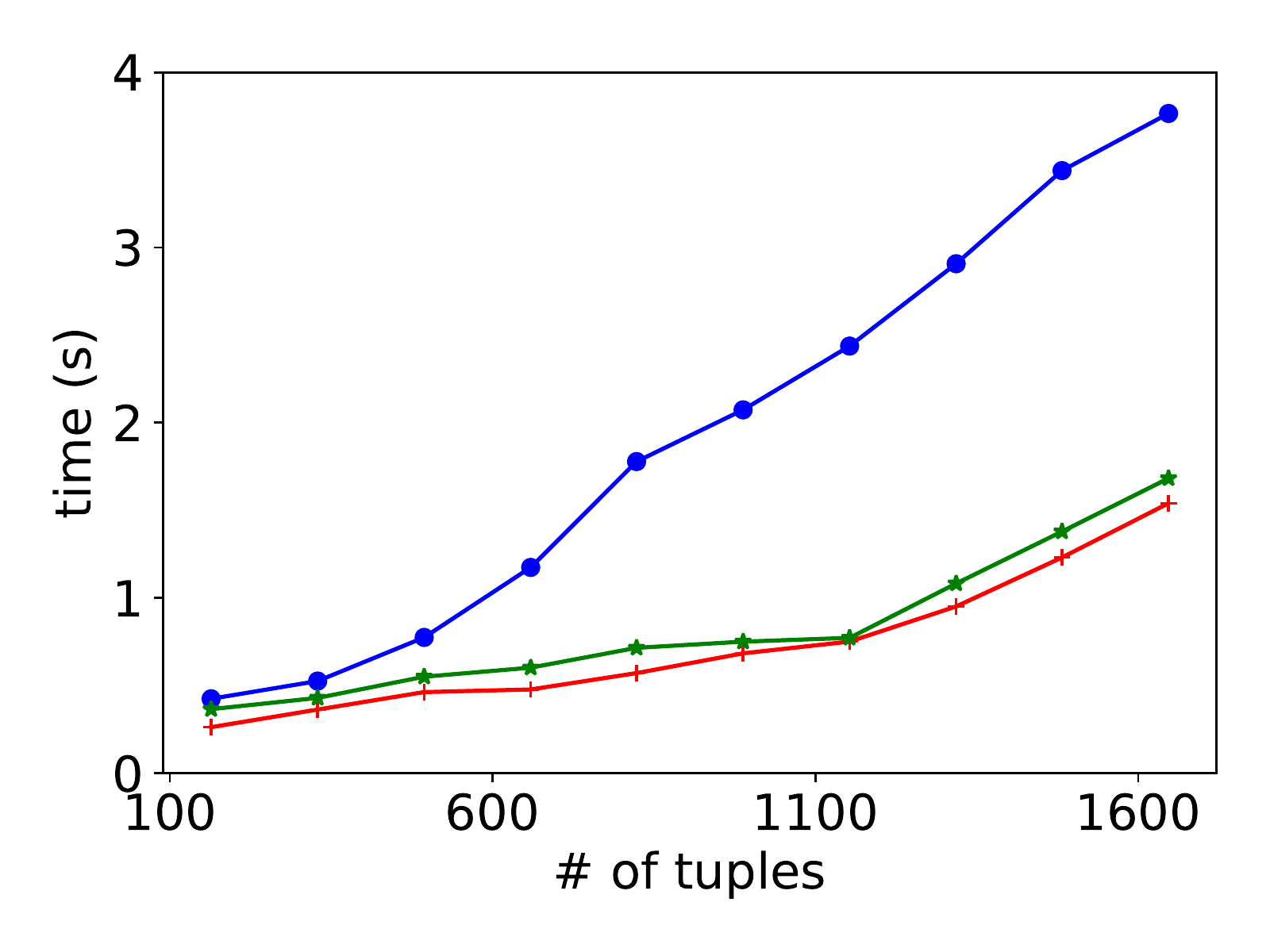}
		\vspace{-8px}
			\centering
		\caption*{{Forbes}}  
	\end{minipage}%
		\end{center}
	\vspace{-10px}
	\caption{Running times as a function of the number of rows in the dataset.} \label{fig:times_data}
	\vspace{-4mm}
\end{figure*}

\begin{figure*}[htpb]
	\begin{center}
		\begin{minipage}[b]{0.333\textwidth}
		\includegraphics[scale = 0.3]{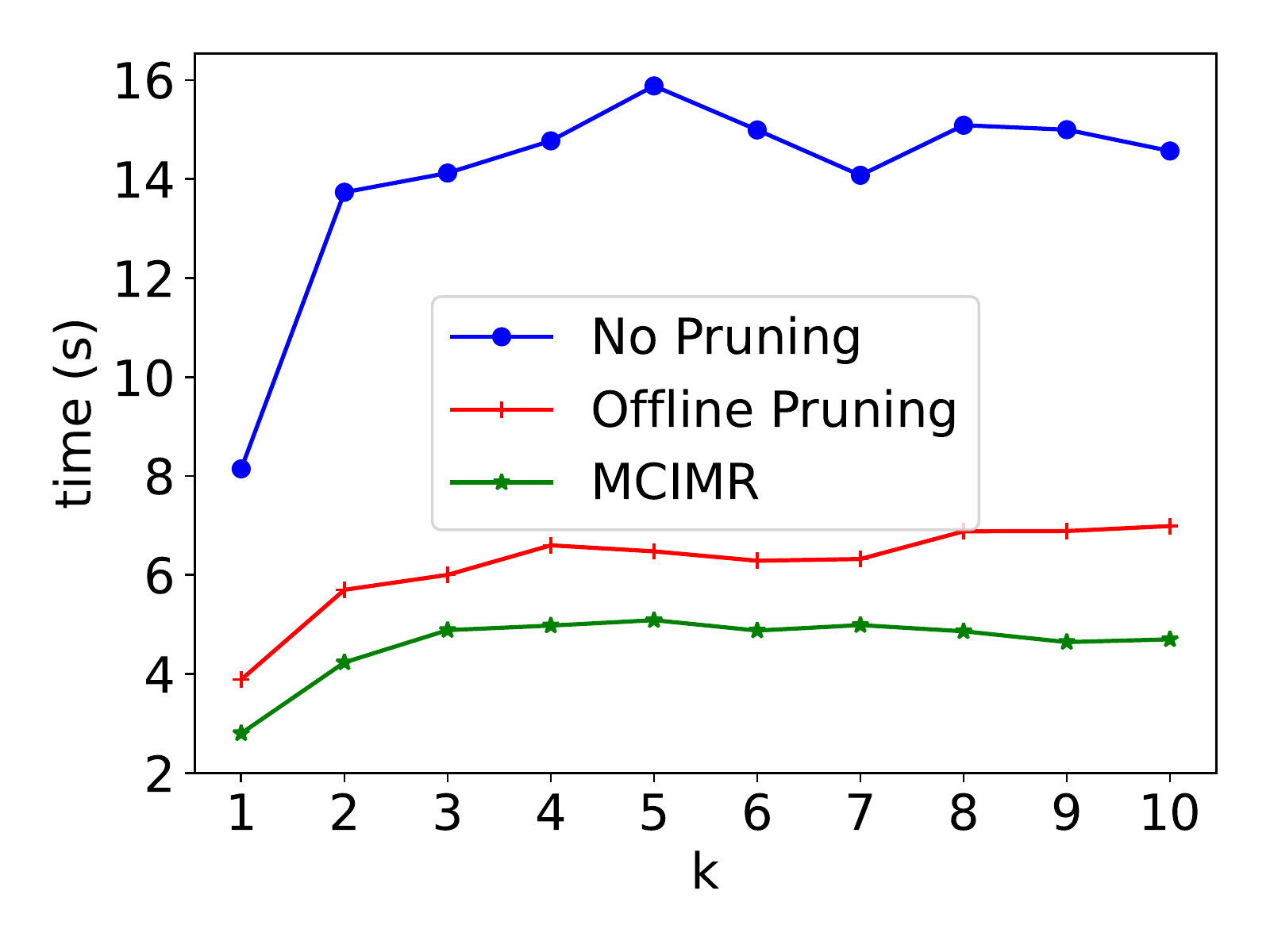}
		\vspace{-8px}
			\centering
		\caption*{{SO}}  
	\end{minipage}%
		\begin{minipage}[b]{0.333\textwidth}
		\includegraphics[scale = 0.3]{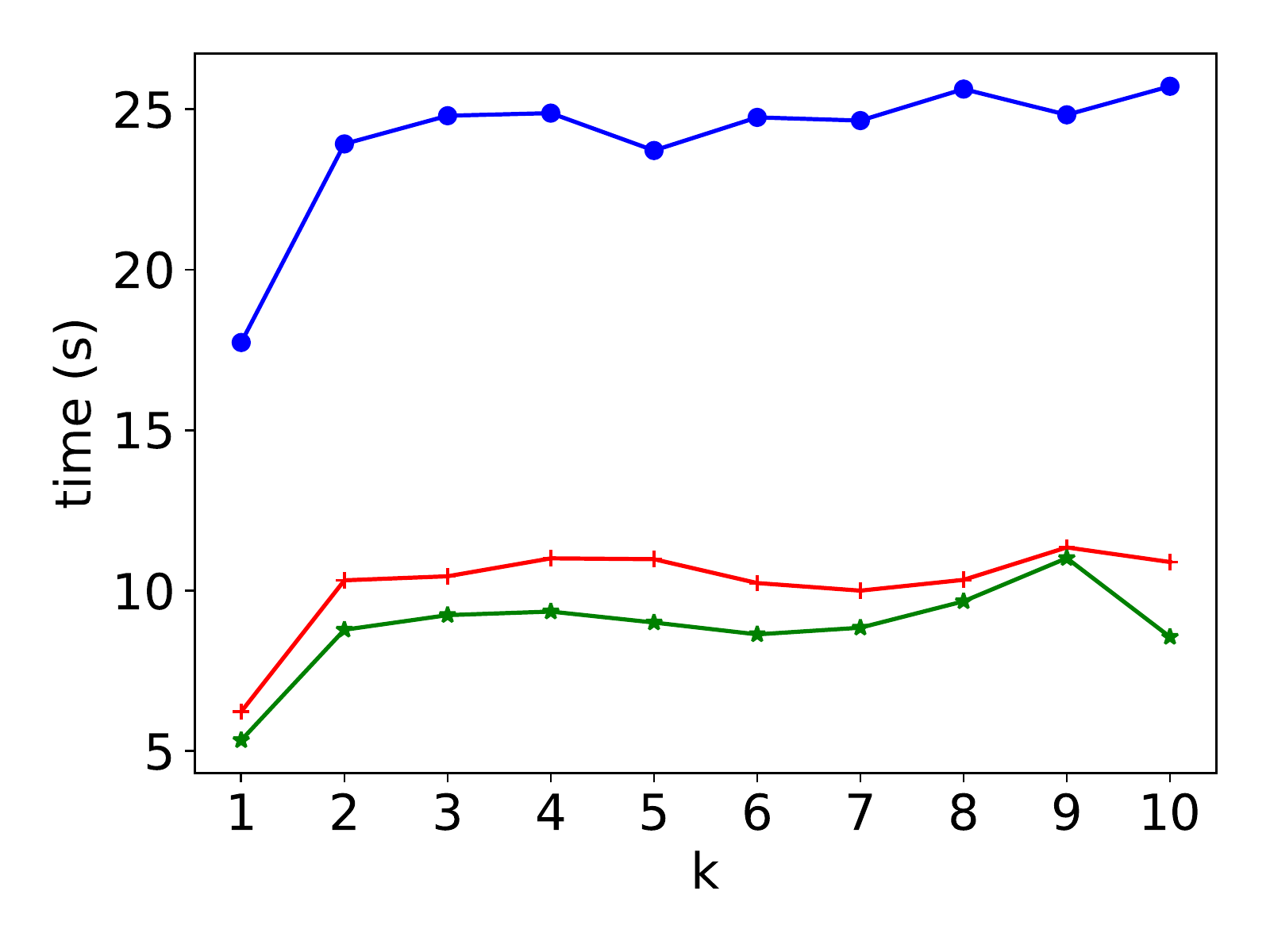}
		\vspace{-8px}
			\centering
		\caption*{{Flights}}  
	\end{minipage}%
		\begin{minipage}[b]{0.333\textwidth}
		\includegraphics[scale = 0.3]{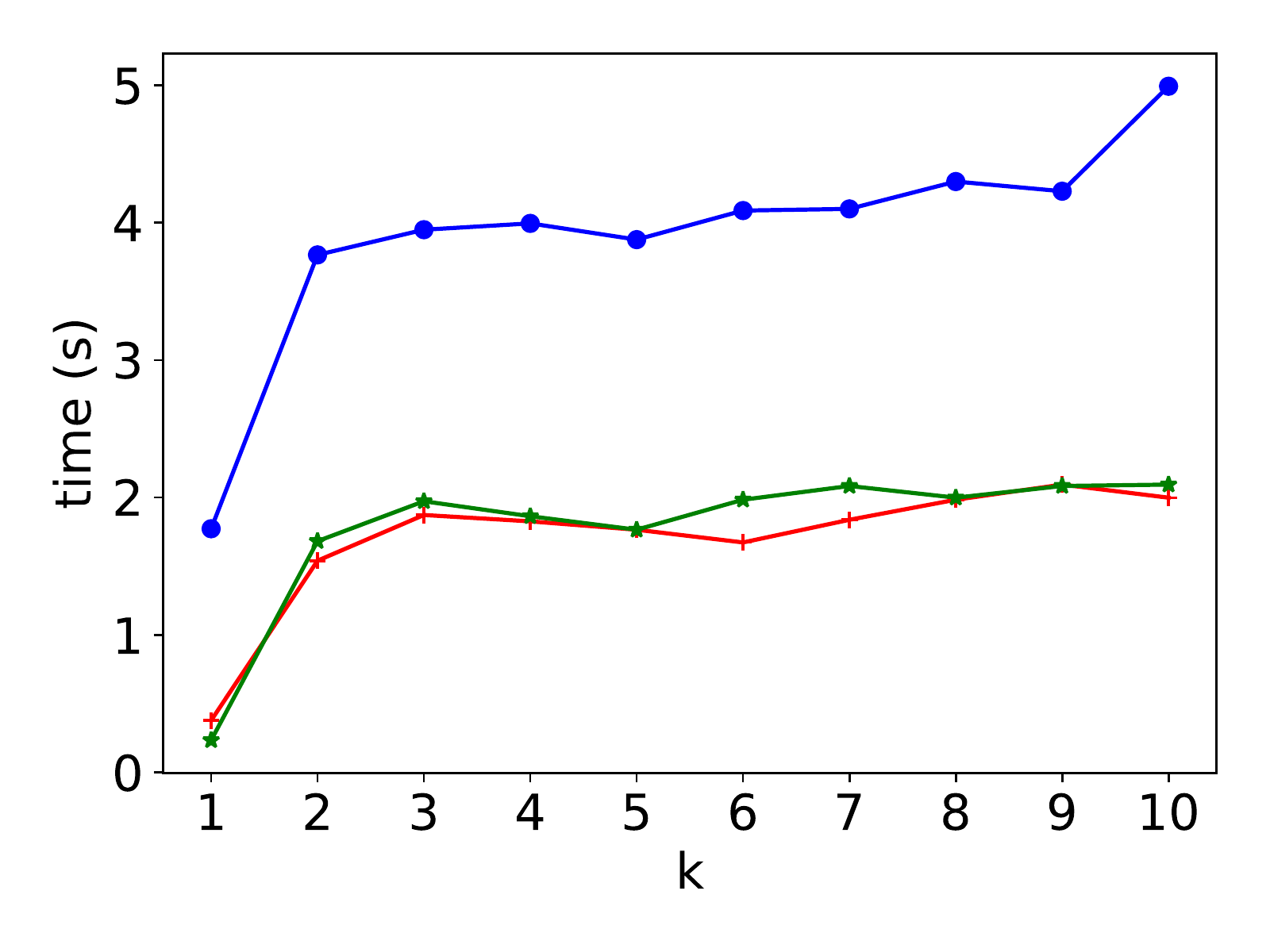}
		\vspace{-8px}
			\centering
		\caption*{{Forbes}}  
	\end{minipage}%
		\end{center}
	\vspace{-10px}
	\caption{Running times as a function of the bound on the explanation size.} \label{fig:times_k}
	\vspace{-5mm}
\end{figure*}

\subsection{Efficiency Evaluation (Q3)}
To examine the contribution of our optimizations, we report the running
times of the following baselines:
\textbf{No Pruning} \textemdash the MCIMR algorithm without pruning;
\textbf{Offline Pruning} \textemdash MCIMR with only offline pruning.
We study the effect of multiple parameters on
running times. For each dataset, we report the average execution time of the queries presented in Section \ref{subsec:quality}. 
In all cases, the execution time of MCIMR was less than $10$ seconds, a reasonable response time for an interactive system.
We omit the results obtained on the (smallest) Covid-19 dataset from presentation, as the results demonstrated similar trends to those of Forbes.

\vspace{1mm}
\textit{Candidate Attributes}.
In this experiment, we omitted from consideration attributes from $\mathcal{A}$ uniformly at random. The results are depicted in Figure \ref{fig:times_cols}. In all dataset, we exhibit a (near) linear growth in
running times as a function of the size of $\mathcal{A}$. 
The execution times of No-Pruning are significantly higher than those of Offline Pruning and MCIMR, \emph{indicating the usefulness of the offline pruning.} The difference in times across datasets is due to their size. Estimating CMI on large datasets (e.g., Flights, SO) takes longer than on small datasets (e.g., Forbes). 
In Forbes, Offline Pruning is faster than MCIMR, implying that in small datasets online pruning is not necessary, as it takes longer that running MCIMR.  


\vspace{1mm}
\textit{Data Size}. We vary the number of tuples in $\mathcal{D}$, by removing tuples uniformly at random. The results are depicted in Figure \ref{fig:times_data}. In SO and Flights, observe that the dataset size has a little effect on running times. This is because the size of the subgroups in the group-by queries were big. Thus, when randomly omitting tuples from the datasets, the number of considered groups is almost unchanged. On the other hand, since in Forbes each group contained only a few records, we exhibit a (near) linear growth in
running times.

\vspace{1mm}
\textit{Explanation size}. We vary the bound on the explanation size. Recall that given a bound $k$, MCIMR returns an explanation of size ${\leq}k$. It may return an explanation of size $l {<} k$ if the responsibility of the $l{+}1$ attribute is ${\approx}0$. The results are shown in Figure \ref{fig:times_k}. In all cases, the size of the explanations was no bigger than $3$. Thus, $k$ has almost no effect on running times, as the algorithms terminate after no more than $4$ iterations.

\subsection{Extensions (Q4)}
\label{subsec:q3}

We examine the effect of extracting attributes following more than one hop in the KG. We report that in the vast majority of cases, \sysName's explanations were unaffected, indicating that most of the relevant information can be found in the first hop. Further details can be found in the Appendix .


\vspace{1mm}
\textit{Unexplained Subgroups}.
We demonstrate the effectiveness of the Top-K unexplained groups algorithm by focusing on SO $Q_1$, setting $\tau {>} 0.2$. The top-5 largest unexplained data groups are given in Table~\ref{tab:unexplained}. Observe that economy-related attributes (e.g., GDP, HDI) of selected data groups are internally consistent (e.g., the HDI of countries in Europe is similar). Thus, it makes sense that the explanation for SO $Q_1$ ($\{$\textsc{HDI, Gini}$\}$) will not be a satisfactory explanation for these data groups. Indeed, as shown in Table \ref{tab:case_study}, the explanation of \sysName\ for the top-1 unexplained group (SO $Q_3$) is different from the one found for all countries. 
We ran this algorithm over the other queries as well. The average execution time is $4.4s$. This demonstrates the ability of our algorithm to efficiently identify data subgroups that are likely to be of interest to users.

\begin{table}
	\centering
	
	\scriptsize
		\caption{Top-5 unexplained groups for SO Q1.}
			\label{tab:unexplained}
			\vspace{-8px}
	\begin{tabular} {|p{7mm}|p{7mm}|p{48mm}|}
		\hline
\textbf{Rank} &	\textbf{Size} & \textbf{Data group}
	 \\
		\hline
		1&18342&\textsc{Continent} $=$ \textsc{Europe}\\
		2&17899&\textsc{Continent} $=$ \textsc{Asia}\\
		3&15466& \textsc{Continent} $=$ \textsc{North America}\\
		4&14788&\textsc{Currency} $=$ \textsc{Euro}\\
		5&12754&\textsc{Continent} $=$ \textsc{Africa}\\
		\hline

	\end{tabular}
\end{table}
\section{Related work}
\label{sec:related}

\textit{Results Explanations}.
Methods explaining why data is missing or mistakenly included in query results have been studied in~\cite{bidoit2014query,chapman2009not,lee2020approximate,ten2015high}. Explanations for unexpected query results have been presented in~\cite{bessa2020effective, miao2019going}. Those works are orthogonal to our work, as we aim to explain unexpected correlations. 
Another line of work provides explanations on how a query result was derived by analyzing its provenance and pointing out tuples that significantly affect the results~\cite{milo2020contribution,meliou2010complexity,meliou2009so}. Those methods are designed to generate tuple-level explanations and not attribute-level explanations that are required for unearthing correlations. 
Another type of explanation for query results is a set of patterns that are shared by one (group of) tuple but not by another (group of) tuple~\cite{el2014interpretable,li2021putting,roy2015explaining,roy2014formal,wu2013scorpion}. However, those works as well do not account for correlations among attributes.
\cite{salimi2018bias} presented HypDB, a system that 
identifies confounding bias in SQL queries for improved decision making, detected using causal analysis. However, as mentioned, HypDB only considers attributes from the input table, and it cannot efficiently handle a large amount of candidate attributes.

We share with~\cite{li2021putting} the motivation for considering explanations that are not solely
drawn from the input table. \cite{li2021putting} presented CajaDE, a system that generates explanations of query results based on
information from tables related to the table
accessed by the query. However, related tables often do not
exist. Moreover, as mentioned in Section \ref{sec:experiments}, their explanations are \emph{independent of the outcome}. Thus, even if CajaDE is given the attributes mined from other sources, it may generate explanations that are irrelevant to the correlation between the exposure and outcome.


\textit{Dataset Discovery}. \revised{Given an input dataset,
dataset discovery methods find related
tables that can be integrated via join or
union operations. Existing methods estimate how joinable or unionable two datasets are~\cite{yang2019gb,zhu2019josie,zhu2016lsh,nargesian2018table}. Other works focused on automating the data augmentation task to discover relevant features for ML models~\cite{chepurko2020arda}. While these works focus
on finding datasets that are joinable or unionable, we aim to find unobserved attributes that explain unexpected correlations. Recent work proposed solutions to discover  datasets that can be joined with an input dataset and contain a column that is correlated with a target column~\cite{esmailoghli2021cocoa,santos2021correlation}. Such techniques can be integrated into our system for
extracting candidate attributes from tabular data. We focus on finding attributes that minimize the partial correlation between two columns rather than finding columns that are correlated with a target column. Thus, future work will extend these techniques to support our goal. }



\textit{Feature Selection}. 
\revised{The \probName\ problem is closely related to the well-studied Feature Selection (FS) problem~\cite{chandrashekar2014survey,guyon2003introduction,li2017feature}.  FS methods select a concise and diverse set of attributes relevant to a target attribute for use in model construction~\cite{chandrashekar2014survey}, whereas we aim to select a conciseness and no-redundant set of attributes that are correlated to the outcome and exposure.  
Closest to our project is a line of work using information-theoretic methods for FS~\cite{li2017feature}. 
Algorithms in this family \cite{lin2006conditional,el2008powerful,meyer2008information,fleuret2004fast} define different criteria to maximize feature relevance and minimize redundancy. Relevance is typically measured by the feature correlation with the target attribute.
Of particular note, the MRMR algorithm~\cite{peng2005feature} selects features based on Max-Relevance and Min-Redundancy criteria. A main difference in MCIMR is that instead of the Max-Relevance criterion, we use a Min-CMI criterion. Another critical difference is the stopping condition. While in MRMR, the size $k$ of the selected feature set is determined using the underlying learning model, in MCIMR, we set $k$ using responsibility scores.}

\textit{Explainable AI}. 
A related line of work is Explainable AI (XAI), an emerging field in machine learning that aims to address how black box decisions of AI systems are made \cite{adadi2018peeking,dovsilovic2018explainable}.
Similar to our approach, XAI can be used to learn new facts,
to gather information and thus to gain knowledge \cite{adadi2018peeking}. We share the motivation with posthoc XAI methods~\cite{assche2007seeing,martens2007comprehensible,dong2017towards}, which extract explanations from already learned models. 
The advantage of this approach is that it does not impact the performance of the model, which is treated as a black box. In \sysName\ as well, we generate explanations after the SQL query was executed, independently from the database engine.

\section{Conclusion and Limitations}
\label{sec:conc}
\revised{This paper presented the \probName\ problem, whose goal is to identify uncontrolled confounding attributes that explain unexpected correlations observed in query
results. 
We developed an efficient algorithm that finds the optimal subset of attributes. 
This algorithm is embodied in a system called \sysName, which adapts the IPW technique for handling missing data.  
\sysName\ is applicable for cases where explanations can be found in a given external knowledge source. 
In this paper we focused on extracting attributes from KGs. Future work would extracted candidate attributes from other sources, such as unstructured data (e.g., text documents). Another interesting future work is to identify which links in a KG are relevant to the explanation and worthy to follow}.





\bibliographystyle{ACM-Reference-Format}
\bibliography{vldb_sample}

\appendix
\section{Missing Proofs}
\label{app:proofs}

In this part we provide missing proofs. 

\begin{proof}[Proof of Corollary \ref{cor:cor1}]
For any two random variables, if $X {\independent} Y$ we have: $H(Y|X) = H(Y)$. This can be generalized to conditional independence as well. We get:
\begin{multline*}
    I(O;T|E,R_E=1,C) = 
    H(O|E,R_E=1,C) - H(O|T,E,R_E=1,C) = \\H(O|E,C) - H(O|T,E,C) = I(O;T|E,C)
\end{multline*}
\end{proof}

\begin{proof}[Proof of Corollary \ref{cor:cor2}]
We have:
\begin{multline*}
    I(E_i; E_j|R_{E_i} = 1, R_{E_j} =1) = \\H(E_i|R_{E_i} = 1, R_{E_j} =1)- H(E_i|E_j,R_{E_i} = 1, R_{E_j} =1) =\\
    H(E_i) - H(E_i|E_j) = I(E_i; E_j)
\end{multline*}
\end{proof}

In what follows, to ease the exposition, we assume that there is no WHERE clause in the query, i.e., $C = \emptyset$. Our results also hold for cases where $C$ is not empty. 

\begin{proof}[Proof of Theorem \ref{thr:min_cmi_min_red}]
Recall that by definition of the algorithm, we assume
that $\boldsymbol{E_{k-1}}$, i.e., the set of $k{-}1$ attributes, has already been
obtained, and thus $\boldsymbol{E_{k-1}}, O$, and $T$ are fixed when selecting the $k$-th attribute. The goal is to select the optimal $k$-th attribute to be added, $E_k$,
from $\mathcal{D} \setminus \boldsymbol{E_{k-1}}$.

By the definition of conditional mutual information, we have:
\begin{multline*}
 I(O;T|\boldsymbol{E_{k-1}}, E_k) = I(O;T|\boldsymbol{E_k}) = \\H(O;\boldsymbol{E_k}) + H(T;\boldsymbol{E_k}) - H(O;T;\boldsymbol{E_k}) - H(\boldsymbol{E_k})
\end{multline*}

We use the following definition of \cite{peng2005feature} for the attributes $E_1, \ldots, E_k$: $J(\boldsymbol{E_k}) = J(E_1,\ldots,E_k)$ where:
$$J(\boldsymbol{E_k}) = \sum\ldots\sum Pr(E_1,\ldots,E_k) \frac{Pr(E_1,\ldots,E_k)}{Pr(E_1)\cdot \ldots\cdot Pr(E_k)}$$
Similarly, we have:
$$J(O,T, \boldsymbol{E_k}) {=} \sum{\ldots}\sum Pr(E_1,\ldots,E_k, O,T) \frac{Pr(E_1,\ldots E_k, O,T)}{Pr(E_1){\ldots} Pr(E_k) Pr(O) Pr(T)}$$
$$J(X, \boldsymbol{E_k}) = \sum{\ldots}\sum Pr(E_1,{\ldots}E_k, X) \frac{Pr(E_1,{\ldots}E_k, O)}{Pr(E_1)\cdot {\ldots} Pr(E_k) Pr(X)}$$

We can derive: 
\begin{multline*}
 H(O;\boldsymbol{E_k}) + H(T;\boldsymbol{E_k})   -H(O;T;\boldsymbol{E_k}) -  H(\boldsymbol{E_k}) =\\
 \\H(O) +\sum_{i = 1}^k  H(E_i)-J(O,\boldsymbol{E_k}) + H(T)+\sum_{i = 1}^k  H(E_i)- J(T,\boldsymbol{E_k}) \\ -H(O) -H(T) - \sum_{i = 1}^k  H(E_i) + J(O,T,\boldsymbol{E_k})  -  \sum_{i = 1}^k  H(E_i) + J(\boldsymbol{E_k}) = \\J(O,T,\boldsymbol{E_k}) + J(\boldsymbol{E_k}) - J(O, \boldsymbol{E_k}) - J(T, \boldsymbol{E_k})
\end{multline*}

Thus we consider the following expression:
\begin{align}\label{eq:ex}
J(O,T,\boldsymbol{E_k}) + J(\boldsymbol{E_k}) - J(O, \boldsymbol{E_k}) - J(T, \boldsymbol{E_k})
\end{align}

We argue that (\ref{eq:ex}) is minimized when the $k$-th attribute minimizes the Min-CIM and Min-Redundancy criteria.

As stated in \cite{peng2005feature}, the maximum of $J(O,\boldsymbol{E_k})$ is attained
when all variables are maximally dependent. When $O, \boldsymbol{E_{k-1}}$
are fixed, this indicates that the attribute $E_k$ should have the
maximal dependency to $O$. In this case, we get that $J(O,T,\boldsymbol{E_k}) = J(T,\boldsymbol{E_k})$. 
Note that when the dependency of $O$ or $T$ in $E_k$ increases, the conditional mutual information $I(O;T|\boldsymbol{E_k})$ decreases. This is the Min-CIM criterion.

Moreover, as noted in \cite{peng2005feature}, the minimum of $J(\boldsymbol{E_k})$ is attained when the attributes
$E_1,\ldots,E_k$ are
independent of each other. As all the attributes $E_1, \ldots, E_{k-1}$ are fixed at this point, this pair-wise independence condition
means that the mutual information between the attribute $E_k$ and any
other attribute $E_i$ is minimized. This is
the Min-Redundancy criterion. 

Thus, we get that the overall expression in (\ref{eq:ex}) is minimized (i.e., $J(O,\boldsymbol{E_k})$ is maximized, $J(O,\boldsymbol{E_k},T) = J(T,\boldsymbol{E_k})$, and $J(\boldsymbol{E_k})$ is minimized) when we are minimizing the Min-CIM and Min-Redundancy criteria.

\end{proof}

\begin{proof}[Proof of Lemma \ref{lemma:resposibility}]
First, since 
$(O\independent E_{k+1}|\boldsymbol{E_k})$ we have:\\ $I(O, E_{k+1} |\boldsymbol{E_k}) = 0$.
We get:
\begin{multline*}
 I(O;T|\boldsymbol{E_k}) -  I(O;T|\boldsymbol{E_k}, E_{k+1}) =\\
 H(O|\boldsymbol{E_k}) - H(O|T,\boldsymbol{E_k}) - H(O|\boldsymbol{E_k}, E_{k+1}) + H(O|T,\boldsymbol{E_k}, E_{k+1})
\end{multline*}
Since $H(O|\boldsymbol{E_k}) - H(O|E_{k+1},\boldsymbol{E_k}) = H(O|\boldsymbol{E_k}) - H(O|\boldsymbol{E_k}) =0$, we get:
\begin{multline*}
 I(O;T|\boldsymbol{E_k}) -  I(O;T|\boldsymbol{E_k}, E_{k+1}) =\\  H(O|T,\boldsymbol{E_k}, E_{k+1}) - H(O|T,\boldsymbol{E_k})  \leq 0
 \end{multline*}
For the last inequality we used the fact that for every three ransom variables $X,Y,X$: $H(X|Y) \leq H(X|Y,Z)$, since adding more conditions can only reduce the uncertainty of $X$. 

We get that the numerator of the responsibility score of $E_{k_1}$ is $\leq 0$, and thus $Resp(E_{k+1}) \leq 0$
\end{proof}

\begin{proposition}
The time complexity of the incremental MCIMR algorithm is $O(k |\mathcal{A}|)$.
\end{proposition}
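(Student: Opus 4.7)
The plan is to show that, with appropriate bookkeeping, Algorithm~\ref{algo:mcimr} performs $O(k|\mathcal{A}|)$ work, where we count each bivariate (C)MI estimation as the unit primitive, in line with the accounting convention of the information-theoretic feature selection literature. The outer loop on lines~3--8 of \textsc{MCIMR} executes at most $k$ times, since each iteration either adds one attribute to $\boldsymbol{E}$ or triggers the early return from the responsibility test; the responsibility test itself is the efficient conditional independence test from~\cite{salimi2018bias} and contributes only a constant factor per iteration, so it is absorbed.

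The key observation is that, read literally, the nested loops in \textsc{NextBestAtt} (lines~13--16) cost $O(|\mathcal{A}|\cdot|\boldsymbol{E}|)$ per call, which over $k$ iterations would give $O(k^{2}|\mathcal{A}|)$. To reach $O(k|\mathcal{A}|)$ I would make explicit that the quantity $S[E] := \sum_{E'\in\boldsymbol{E}} I(E;E')$ is maintained incrementally across outer iterations rather than recomputed from scratch each call. At initialization, precompute and cache $I(O;T\mid C,E)$ for every $E\in\mathcal{A}$ in $O(|\mathcal{A}|)$ primitives, and initialize $S[E]\gets 0$.

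After each outer iteration chooses a new attribute $E_i$ and adds it to $\boldsymbol{E}$, update $S[E]\gets S[E]+I(E;E_i)$ for every remaining candidate $E\in\mathcal{A}\setminus\boldsymbol{E}$, which costs $O(|\mathcal{A}|)$ MI primitives per iteration. With these cached values, the body of \textsc{NextBestAtt} reduces to one linear scan that evaluates $v_1+S[E]/|\boldsymbol{E}|$ from stored quantities and tracks the minimum, costing $O(|\mathcal{A}|)$ per call. Summing gives $O(|\mathcal{A}|)$ for the precomputation, plus $O(k\cdot|\mathcal{A}|)$ for the $k$ calls to \textsc{NextBestAtt}, plus $O(k\cdot|\mathcal{A}|)$ for the incremental updates to $S$, totaling $O(k|\mathcal{A}|)$.

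The main obstacle is really a presentational one: the pseudocode does not display the incremental maintenance of $S$, so I must argue that rewriting the algorithm this way does not alter its output. This follows because $\boldsymbol{E}$ grows monotonically by one element per iteration and $I(E;E')$ depends only on the pair $(E,E')$, not on the rest of $\boldsymbol{E}$; hence the running sum is well-defined, and the rewritten algorithm produces the same attribute ordering, the same responsibility tests, and the same final explanation as the literal one.
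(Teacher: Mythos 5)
Your proposal is correct, but it is noticeably more careful than the paper's own argument, and the two differ in what they count. The paper's proof is two sentences: each of the at most $k$ iterations examines every attribute in $\mathcal{A}$ against the criterion of Equation~\ref{eq:condition}, hence $O(k|\mathcal{A}|)$. Implicitly, this treats the evaluation of one candidate attribute --- including the inner redundancy sum $\sum_{E'\in\boldsymbol{E}} I(E;E')$ --- as a single unit of work. You instead adopt the finer-grained convention of counting bivariate (C)MI estimations, correctly observe that under that convention the literal nested loops of \textsc{NextBestAtt} cost $O(|\mathcal{A}|\cdot|\boldsymbol{E}|)$ per call and hence $O(k^2|\mathcal{A}|)$ overall, and then recover the stated $O(k|\mathcal{A}|)$ bound by incrementally maintaining the running sums $S[E]$ and caching the values $I(O;T\mid C,E)$. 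Your closing argument that the rewrite preserves the algorithm's output is sound, since $\boldsymbol{E}$ grows monotonically and each $I(E;E')$ depends only on the pair. What your approach buys is a bound that actually holds under the natural cost model where MI estimation is the dominant primitive; what the paper's approach buys is brevity, at the price of either hiding a factor of $k$ or silently assuming the very memoization you spell out. Since $k$ is a small constant in practice (the paper uses $k=5$), the distinction has no practical impact, but your version is the one that would survive scrutiny as stated.
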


\begin{proof}
At each iteration, the MCIMR algorithm selects a new attribute to be added based on the condition defined in Equation (5). In the worst case, it examines all attributes in $\mathcal{A}$. Since it stops after at most $k$ iterations, we get that the time complexity is $O(k |\mathcal{A}|)$.  
\end{proof}

We next prove that logical dependencies can lead to a misleading conclusion that we found a confounding attribute. 

\begin{lemma}
If for an attribute $E$ we have: $FD: E {\Rightarrow} T$ then we get $I(O;T|E,C) {=} 0$. 
\end{lemma}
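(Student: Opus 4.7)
The plan is to unfold the definition of conditional mutual information in terms of conditional entropies and then exploit the fact that a functional dependency $E \Rightarrow T$ fully determines $T$ from $E$. Concretely, I would write
\[
I(O;T \mid E, C) \;=\; H(T \mid E, C) \;-\; H(T \mid O, E, C),
\]
so that the claim reduces to showing both entropies on the right-hand side vanish.

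Next, I would translate the functional dependency $E \Rightarrow T$ into an information-theoretic statement. Since $T$ is a deterministic function of $E$, we have $H(T \mid E) = 0$. Then I would invoke the standard monotonicity property of conditional entropy, namely that additional conditioning cannot increase entropy: $H(T \mid E, C) \le H(T \mid E) = 0$, and similarly $H(T \mid O, E, C) \le H(T \mid E) = 0$. Combined with non-negativity of entropy, both quantities are forced to equal $0$.

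Substituting back yields $I(O;T \mid E, C) = 0 - 0 = 0$, which is exactly the conclusion. The main obstacle, if any, is just being careful about the direction of the monotonicity inequality for conditional entropy (adding conditioning variables weakly decreases entropy) and about handling the context $C$ uniformly alongside $E$; no subtle information-theoretic identity is needed beyond these standard facts, so the proof is essentially a two-line computation once the FD is translated into $H(T \mid E) = 0$.
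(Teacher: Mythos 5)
Your proof is correct, but it takes a different (symmetric) expansion of the conditional mutual information than the paper does. The paper writes $I(O;T\mid E,C)=H(O\mid E,C)-H(O\mid T,E,C)$ and argues that, because $T$ is functionally determined by $E$, adding $T$ to a conditioning set that already contains $E$ changes nothing, so the two terms coincide and the difference is zero. You instead write $I(O;T\mid E,C)=H(T\mid E,C)-H(T\mid O,E,C)$, translate the FD into $H(T\mid E)=0$, and kill each term separately via monotonicity of conditional entropy plus nonnegativity. Your route is arguably the more elementary and airtight one: it needs only the two textbook facts $H(T\mid E)=0$ and $H(X\mid Y,Z)\le H(X\mid Y)$, each term is squeezed to zero individually, and the context $C$ is carried along for free. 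The paper's route is slightly less self-contained, since the step $H(O\mid T,E,C)=H(O\mid E,C)$ (dropping a variable that is a deterministic function of the remaining conditions) is true but deserves its own one-line justification; on the other hand, the paper's phrasing with $\approx$ is deliberately tolerant of \emph{approximate} functional dependencies, which is the form actually tested in the pruning step ($H(T\mid E)\approx H(E\mid T)\approx 0$), and your exact two-sided squeeze would need to be restated as a continuity bound to cover that case. Both arguments establish the lemma as stated.
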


\begin{proof}
If for an attribute $E$ we have: $FD: E {\Rightarrow} T$ then we have 
$H(T|E) {\approx} 0$. We get: 
$
 I(O;T|E,C) {=}$ $ H(O|E,C) {-} H(O|T,E,C)   
$.
But since $T$ and $E$ are dependent, we get: $H(O|T,E,C) {\approx} H(O|E,C)$ and thus $I(O;T|E,C) {=} 0$.
\end{proof}

The lemma also holds for the case that the attribute $E$ logically depends on the outcome $O$.

\textbf{Relevance Test}: Given a candidate attribute $E$, if $(O {\independent} E|C)$ and $(O {\independent} E|C,T)$ we get that $H(O|E,C) {=} H(O|C)$ and $H(O|T,E,C) {=} H(O|T,C)$. 
Thus: 
$$I(O;T|E,C) {=} H(O|E,C) {-} H(O|T,E,C) {=} H(O|C) {-} H(O|T,C) {=} I(O;T|C)$$
That means that the individual explanation power of $E$ is low, and thus it can be dropped as we assume $E$ cannot be a part of the optimal explanation. 

\section{Experiments}

\paragraph*{Explanation quality}
Next, we provide references supporting the explanations generated by \sysName. These in-domain findings serve as "domain-expert" explanations.

\textbf{SO Q1}: It was shown in \cite{stackoverflowreport} that there is a correlation between the developers salary and countries' economies. In \url{https://www.daxx.com/blog/development-trends/it-salaries-software-developer-trends}, it was also shown that the countries with the highest salary for developers are countries with a relatively high HDI (e.g., the US, Switzerland, Denmark). 

\textbf{SO Q2 + Q3}: It was mentioned in \url{https://content.techgig.com/career-advice/what-is-the-average-salary-of-software-engineers-in-different-countries/articleshow/91121900.cms} that countries that have a scarcity of software graduates tend to offer higher salaries than countries like India which produce hundred of thousands developers every year. This suggests that besides the economy of a country (resp., continent), the population size is also a factor that affects the average salary of developers.

\textbf{Flights Q1}: It was stated in \cite{usatoday} that weather is one of the top reasons for flights delay in the US.

\textbf{Flights Q2-Q4}: It was mentioned in \url{https://www.bts.gov/topics/airlines-and-airports/understanding-reporting-causes-flight-delays-and-cancellations} that besides weather conditions, main causes for flights delay in the US are heavy traffic volume, and air traffic control. Those two factors are highly correlated with population size. In bigger and more dense area, the air traffic increases. 

\textbf{Flights Q5}: It was mentioned in \url{https://www.bts.gov/topics/airlines-and-airports/understanding-reporting-causes-flight-delays-and-cancellations} that a main cause of the delay of flights in the US is the airline's control (e.g., maintenance or crew problems).  

\textbf{Covid Q1}: It was shown that there is a correlation between countries' economies and Covid-19 death rate \cite{upadhyay2021correlation,kaklauskas2022effects}.

\textbf{Covid Q2-Q3}: It was stated in \cite{pascoal2022population,martins2021relationship} that population density impact on COVID-19 mortality rate.

\textbf{Forbes Q1}: It was shown in \url{https://www.theguardian.com/world/2019/sep/15/hollywoods-gender-pay-gap-revealed-male-stars-earn-1m-more-per-film-than-women} that there is a gender pay gap for actors in Hollywood. Thus, it make sense that gender is a factor affecting the average salary of actors. It was also stated in \url{https://www.gobankingrates.com/money/jobs/how-much-do-actors-make/} that actors get paid according to their experience, which is reflected in their net worth.

\textbf{Forbes Q2}: It was mentioned in \url{https://climbtheladder.com/producer-salary/} that what affects directors and producers salary is their level of experience (which is reflected in the awards and net worth attributes). 

\textbf{Forbes Q3}: It was stated in \url{} that very often professional athletes salaries are performance-based. The performance quality is reflected in the Cups and Draft Pick attributes (for tennis, basketball and  football athletes, which are the majority of athletes in the Forbes dataset).

We next present additional experiments. 

\vspace{1mm}
\textit{Impact of pruning}.  
We next examine how useful were our pruning techniques. 
\textbf{Offline Pruning}. We found that our two offline pruning optimizations to be highly useful: On average, we dropped 41\%, 59\%, 45\%, and 73\% of the extracted attributes, in the SO, Flights, Covid-19, and Forbes dataset, resp. 
\textbf{Online Pruning}. 
At query time, we filter the extracted attributes using the logical dependency and the low relevance techniques.
Not surprisingly, as most irrelevant attributes were already dropped in the offline phase, we dropped many fewer attributes at this phase. On average, we dropped  14\%, 6\%, 11\% and 3\% of the remaining attributes, in SO, Flights, Covid-19, and Forbes, resp.

\vspace{1mm}
\textit{Entity linker}. Many of the missing values were caused by an unsuccessful matching of values from the table to their entities in the KG. For example, in SO, for some developers, their origin country is \texttt{Russian Federation}. However, the corresponding entity in DBpedia is called \texttt{Russia}. We thus failed to extract the properties of this country. In other cases, the values that appear in the tables were ambiguous, and thus we failed to match them to DBpedia entities. For example, in Forbes, one of the athletes is called \texttt{Ronaldo}. SpaCy entity linker could not decide whether to link this value to the entity \texttt{Ronaldo Luís Nazário de Lima} (Brazilian footballer) or to \texttt{Cristiano Ronaldo} (Portuguese footballer).

\vspace{1mm}
\textit{Multi-Hops}.
We examine the effect of extracting attributes following more than one hop in the KG. We report that in the vast majority of cases, \sysName's explanations were unaffected.
Namely, almost all attributes extracted from 2 or more hops were found to be irrelevant (and were pruned).
In some cases, we found at most one more attribute that was included in the explanations. For example, in Forbes $Q_1$, an attribute representing the average budget of the films played by actors (attribute extracted from 2-hops) was included in the explanation. In all cases, no attributes from 3 or more hops was considered to be relevant. 
Further, since the number of candidate attributes was increased (in 145\%, on average), running times were increased (by up to 15 seconds). This indicates that most of the relevant information can be found in the first hop. Future research will predict which paths in the KG may lead to relevant attributes.  
\end{document}